\DeclareFontFamily{U}{mathx}{\hyphenchar\font45}
\DeclareFontShape{U}{mathx}{m}{n}{
      <5> <6> <7> <8> <9> <10>
      <10.95> <12> <14.4> <17.28> <20.74> <24.88>
      mathx10
      }{}
\DeclareSymbolFont{mathx}{U}{mathx}{m}{n}
\DeclareMathSymbol{\bigtimes}{1}{mathx}{"91}
\definecolor{DarkRed}{rgb}{0.5,0.1,0.1}
\definecolor{DarkBlue}{rgb}{0.1,0.1,0.5}
\definecolor{ForestGreen}{rgb}{0.1333,0.5451,0.1333}
\definecolor{Red}{rgb}{0.9,0,0}
\crefname{property}{property}{Property}
\crefname{equation}{eq}{Eq}
\def\BState{\State\hskip-\ALG@thistlm}
\newtheorem{theorem}{Theorem}
\newtheorem{lemma}{Lemma}[section]
\newtheorem{proposition}[lemma]{Proposition}
\newtheorem{claim}[lemma]{Claim}
\newtheorem{fact}[lemma]{Fact}
\newtheorem{definition}[lemma]{Definition}
\newtheorem{problem}{Problem}
\newtheorem*{theorem*}{Theorem}
\newtheorem*{claim*}{Claim}
\newtheorem*{proposition*}{Proposition}
\newtheorem*{lemma*}{Lemma}
\newtheorem*{problem*}{Problem}
\crefname{lemma}{Lemma}{Lemmas}
\crefname{claim}{Claim}{Claims}
\newtheorem{mdresult}{Result}
\newenvironment{result}{\begin{mdframed}[backgroundcolor=lightgray!40,topline=false,rightline=false,leftline=false,bottomline=false,innertopmargin=2pt]\begin{mdresult}}{\end{mdresult}\end{mdframed}}
\newtheorem{remark}[lemma]{Remark}
\newtheoremstyle{restate}{}{}{\itshape}{}{\bfseries}{~(restated).}{.5em}{\thmnote{#3}}
\theoremstyle{restate}
\renewcommand{\qed}{\nobreak \ifvmode \relax \else
      \ifdim\lastskip<1.5em \hskip-\lastskip
      \hskip1.5em plus0em minus0.5em \fi \nobreak
      \vrule height0.75em width0.5em depth0.25em\fi}
\newcommand{\tvd}[2]{\ensuremath{\norm{#1 - #2}_{\mathrm{tvd}}}}
\newcommand{\eps}{\ensuremath{\varepsilon}}
\newcommand{\Bracket}[1]{\Big[#1\Big]}
\newcommand{\bracket}[1]{\left[#1\right]}
\newcommand{\paren}[1]{\ensuremath{\left(#1\right)}\xspace}
\newcommand{\card}[1]{\left\vert{#1}\right\vert}
\newcommand{\norm}[1]{\ensuremath{\|#1\|}}
\newcommand{\expect}[1]{\Exp\bracket{#1}}
\newcommand{\var}[1]{\textnormal{Var}\bracket{#1}}
\newcommand{\set}[1]{\ensuremath{\left\{ #1 \right\}}}
\newcommand{\ALG}{\ensuremath{\mbox{\sc alg}}\xspace}
\newcommand{\TCP}{\ensuremath{\textnormal{\emph{TCP}}}\xspace}
\DeclareMathOperator*{\Exp}{\ensuremath{{\mathbb{E}}}}
\DeclareMathOperator*{\Prob}{\ensuremath{\textnormal{Pr}}}
\renewcommand{\Pr}{\Prob}
\newcommand{\Ex}{\Exp}
\newenvironment{tbox}{\begin{tcolorbox}[
		enlarge top by=5pt,
		enlarge bottom by=5pt,
		 breakable,
		 boxsep=0pt,
                  left=4pt,
                  right=4pt,
                  top=10pt,
                  arc=0pt,
                  boxrule=1pt,toprule=1pt,
                  colback=white
                  ]%%
	}
{\end{tcolorbox}}
\newcommand{\event}{\ensuremath{\mathcal{E}}}
\newcommand{\kl}[2]{\ensuremath{\mathbb{D}(#1~||~#2)}}
\newcommand{\II}{\ensuremath{\mathbb{I}}}
\newcommand{\mireal}[1][]{
  \ifx\relax#1\relax%
    \II(\mione \,; \mitwo)%
  \else%
    \II(\mione \,; \mitwo\mid #1)%
  \fi
}
\newcommand{\vareps}{\varepsilon}
\newcommand{\dist}{\ensuremath{\mathcal{D}}}
\newcommand{\hindex}[1]{\textsf{h(#1)}}
\renewcommand{\leq}{\leqslant}
\renewcommand{\geq}{\geqslant}
\title{Asymptotically Optimal Bounds for Estimating H-Index in Sublinear Time with Applications to Subgraph Counting}
\author{Sepehr Assadi\footnote{(\texttt{sepehr@assadi.info}) Department of Computer Science, Rutgers University. Research supported in part by a NSF CAREER Grant CCF-2047061, a Google Research gift, and a Fulcrum award from Rutgers Research Council.} \and 
Hoai-An Nguyen\footnote{(\texttt{hnn14@scarletmail.rutgers.edu}) Department of Computer Science, Rutgers University. Research supported in part by a NSF CAREER Grant CCF-2047061.}}
\date{}
\begin{document}
\maketitle

\pagenumbering{roman}

% !TeX root = main.tex 
%!TEX root = main.tex
\begin{abstract}
	The \emph{h-index} is a metric used to measure the impact of a user in a publication setting, such as a member of a social network with many highly liked posts or a researcher in an academic domain with many highly cited publications. Specifically, the $h$-index of a user is the largest integer $h$ such that at least $h$ publications of the user have at least $h$ units of positive feedback. 

\medskip

	We design an algorithm that, given query access to the $n$ publications of a user and each publication's corresponding positive feedback number, outputs a $(1\pm \eps)$-approximation of the $h$-index
	of this user with probability at least $1-\delta$ in time
	\[
	O\paren{\frac{n \cdot \ln{(1/\delta)}}{\eps^2 \cdot h}},
	\]
	where $h$ is the actual $h$-index which is unknown to the algorithm a-priori. We then design a novel lower bound technique 
	that allows us to prove that this bound is in fact \textbf{asymptotically optimal} for this problem in \textbf{all parameters} $n,h,\eps,$ and $\delta$. 

\medskip

	Our work is one of the first in sublinear time algorithms that addresses obtaining asymptotically optimal bounds, especially in terms of the error and confidence parameters. As such, we focus on 
	designing novel techniques for this task. In particular, our lower bound technique seems quite general -- to showcase this, we also 
	use our approach to prove an asymptotically optimal lower bound for the problem of estimating the number of triangles in a graph in sublinear time, 
	which now is also optimal in the error and confidence parameters. This result improves upon prior lower bounds of Eden, Levi, Ron, and Seshadhri (FOCS'15) for this problem, 
	as well as multiple follow-ups that extended this lower bound to other subgraph counting problems. 
\end{abstract}

%\keywords{Sublinear time algorithms, h-index, asymptotically optimal bounds, lower bounds, subgraph counting} %TODO mandatory; please add comma-separated list of keywords

\clearpage

\setcounter{tocdepth}{2}
\tableofcontents

\clearpage

\pagenumbering{arabic}
\setcounter{page}{1}

% !TeX root = main.tex 
%!TEX root = main.tex

\newcommand{\ORTP}{\ensuremath{\textnormal{\emph{PTP}}}\xspace}

\newcommand{\R}[2]{\ensuremath{R_{#2}(#1)}}

\newcommand{\D}[3]{\ensuremath{D_{#2,#3}(#1)}}

\renewcommand{\dist}{\ensuremath{D}}

\renewcommand{\ALG}{\mathcal{A}}

\newcommand{\QA}[1]{\ensuremath{Q}_{\ALG}(#1)}

\newcommand{\Bern}[1]{\mathcal{B}(#1)}

\renewcommand{\hindex}[1]{\ensuremath{\textnormal{\textsf{h}}(#1)}\xspace}

\section{Introduction}\label{sec:intro}

The \emph{Hirsch} index, or $h$-index for short, is a metric used to measure the impact of a researcher’s publications~\cite{Hirsch05}. It is an integer that considers both the number of  publications and citations a researcher has and is used in a number of contexts including consideration for grants and job opportunities. We can abstract out this problem by modeling each individual researcher as an array $A[1 : n]$ where $n$ is the number of papers they have published and $A[i]$ is the number of citations paper $i \in [n]$ has. The $h$-index of $A$ is then defined as follows. 

\begin{definition}
The \textbf{\emph{h-index}} of an array $A[1:n]$, denoted by $\hindex{A}$, is the \underline{maximum} integer $h$ such that $A[1 : n]$ has at least $h$ indices, $i_j$, where for each $j\in[h], A[i_j] \geq h$.
\end{definition}

There are simple algorithms that can compute the value of $\hindex{A}$ for any given array $A$ in $O(n)$ time. For instance, we can change each entry of $A[i]$ to $\min\set{A[i],n}$ without changing $\hindex{A}$ (since $\hindex{A} \leq n$), and then
run counting sort on $A$ in linear time to sort $A$ in decreasing order. We can then make another pass over $A$ and output the largest index $i \in [n]$ such that $A[i] \geq i$ which will be equal to $\hindex{A}$ now that $A$ is sorted. 
This solves the $h$-index problem in $\Theta(n)$ time. 

The question we focus on in this paper is whether we can solve this problem even faster than reading the entire input, namely, via a \emph{sublinear time} algorithm, 
assuming we can read each single entry of $A$ in $O(1)$ time. There are easy observations that show that the answer to this question is \emph{No} without relaxing the problem: 
deterministic algorithms cannot solve this problem in sublinear time even approximately, and randomized algorithms cannot find an exact answer\footnote{A deterministic algorithm 
running in $o(n)$ time cannot distinguish between an array $A$ which is all zeros and an array $B$ obtained from $A$ by making $n/2$ entries have value $n/2$ instead. This is because the first $n/2$ queries of the algorithm to indices of $A$ or $B$ can be $0$ 
in both cases. Yet, we have $\hindex{A} = 0$ and $\hindex{B} = n/2$. 

 Similarly, a randomized algorithm running in $o(n)$ time cannot distinguish between an array $A$ filled with value $n$ and an array $B$ obtained from $A$ 
by changing exactly one of the entries to $n-1$ instead. This can be proven, say, by using the $\Omega(n)$ lower bound on the query complexity of OR~\cite{BuhrmanW02}. In this case $\hindex{A}=n$ and $\hindex{B}=n-1$.}. 
Such observations however are commonplace when it comes to sublinear time algorithms. Thus, our goal in this paper is to solve this problem allowing both randomization and approximation. 

\begin{result}\label{res:main}
	There is an algorithm that for any array $A[1:n]$ and any $\eps,\delta \in (0,1)$, with probability at least $1-\delta$, 
	outputs an estimate $\tilde{h}$ such that ${|\tilde{h} - \hindex{A}|} \leq \eps \cdot \hindex{A}$ in $O(\frac{n \cdot \ln{(1/\delta)}}{\eps^2 \cdot \hindex{A}})$ time. 
	Moreover, we prove that this algorithm is {asymptotically optimal} in \underline{all} parameters involved. 
\end{result}

\Cref{res:main} gives a randomized sublinear time algorithm for a $(1\pm \eps)$-approximation of the $h$-index problem, where the runtime improves 
depending on the value of the $h$-index itself. This is quite common in sublinear time algorithms; see, e.g.~\cite{EdenLRS15,EdenRS18,AssadiKK19} for estimating the number of subgraphs,~\cite{BishnuGMP21} for minimum cut, 
or~\cite{EdenR18,FichtenbergerG020,EdenMR21,TetekT22} for sampling small subgraphs, among others. In all the aforementioned examples, such dependences are necessary, which is also the case for ours by the lower bound we prove. 

Our~\Cref{res:main}, however, is quite novel from a different perspective: the obtained bounds are asymptotically optimal in \emph{all} the parameters of the problem, including $\eps$ and $\delta$. We are not aware of any prior work with such strong guarantees
as we will discuss in more detail in the next subsection. Moreover, as a corollary of our techniques in proving the lower bound for~\Cref{res:main} with dependence on both $\eps$ and $\delta$, 
we also obtain an asymptotically optimal lower bound for the well-studied problem of counting triangles in sublinear time that now matches the dependence on $\eps$ and $\delta$ as well, improving upon
the prior work in~\cite{EdenLRS15,EdenR18b,AssadiKK19}.

\subsection{Key Motivations}

There are two key, yet disjoint, motivations behind our work that we elaborate on below.  

\paragraph*{Measuring ``impact'' quickly.} Consider any ``publication setting'' that allows for user feedback. This can range from social networks with users posting topics and others liking them
all the way to the academic domain with researchers publishing papers and others citing them. A question studied frequently in social sciences is how to measure the ``impact'' of a single user in such a setting for many different contexts, including identifying impactful users for marketing or propagating information; see, e.g.~\cite{RiquelmeC16} and the references therein. 

One of the well-accepted measures of impact in these publication settings is the $h$-index measure we study in this paper~\cite{Hirsch05,RiquelmeC16}. Given the ubiquity of massive publication settings and their evolving nature, say, social networks, 
we need algorithms that are able to compute the $h$-index of different users efficiently; see, e.g.~\cite{hIndexStreaming} that design such algorithms in the closely related \emph{streaming} model (which focuses on the space usage of algorithms instead of their time). Thus, a key motivation behind our~\Cref{res:main} is to provide a time-efficient algorithm for this purpose. In general, it seems like a fascinating area of research to 
obtain efficient algorithms for measuring various notions of impact in these massive publication settings in parallel to the line of work, e.g., in~\cite{RiquelmeC16}, that searches for the ``right'' measure itself. 

In addition, the $h$-index measure we study in this paper has numerous applications within network science. In~\cite{edenJain18}, it is shown that when the $h$-index of a network (defined on degrees of vertices) 
is large enough, their algorithm for approximating the degree distribution runs in sublinear time. In~\cite{luZhou16}, the focus is on computing coreness through iteratively using an operator that can calculate the $h$-index of any node to identify influential nodes: an important step in understanding a network's dynamics and structure. Neither work focuses on computing the $h$-index itself efficiently, so the use of our algorithm could help prevent impractical runtimes. Building on~\cite{luZhou16}, ~\cite{sariyuceSP18} generalizes using an iterative $h$-index operator for truss and nucleus decomposition to find dense subgraphs. They use the classical linear algorithm for calculating the $h$-index, which therefore leaves the opportunity to use our algorithm to achieve better efficiency.  

\paragraph*{Asymptotically optimal sublinear time algorithms.} Traditionally, the work on sublinear time algorithms have been rather cavalier with the dependence on the approximation parameter $\eps$, confidence parameter $\delta$, and logarithmic factors. It is certainly important to focus on the ``high order terms'' in the complexity of problems, say, in numerous works on subgraph counting; see, e.g.,~\cite{EdenLRS15,EdenRS18,EdenRS20} and references therein. 
However, as already observed in~\cite{Goldreich17}: ``the dependence of the complexity on the approximation parameter is a key issue''. For instance, in any $(1\pm \eps)$-approximation 
algorithm, for a typical value of $\eps \sim 1\%$, one extra factor of $1/\eps$ in the runtime translates to roughly a $100x$ slower algorithm, which is almost always a deal breaker for the practical purposes of sublinear time algorithms!
Similar considerations also apply, but perhaps to a lower extent, to having a large dependence on logarithmic factors instead of asymptotically optimal bounds. In terms of the confidence parameter, $\delta$, the runtime dependence of sublinear time algorithms almost always includes the term $\ln(1/\delta)$. It is important for practical considerations to determine whether this dependence is necessary. 

Despite this, such considerations have not been studied in sublinear time algorithms. The only prior work we are aware of is the very recent work of~\cite{TetekT22} that improved the $O(\nicefrac{1}{\sqrt{\eps}})$-dependence of the
algorithm of~\cite{EdenR18} for sampling edges $\eps$-point-wise close to uniform to an $O(\log{(1/\eps)})$-dependence. This is in stark contrast with the large body of work 
in  related areas such as streaming~\cite{KaneNW10b,LiW13,BravermanKSV14}, graph streaming~\cite{NelsonY19,AssadiS22}, compressed sensing~\cite{PriceW11,PriceW13}, sampling~\cite{KapralovNPWWY17}, and 
dynamic graph algorithms~\cite{Solomon16,HenzingerP20,BhattacharyaGKL22} which put emphasis on obtaining asymptotically optimal algorithms and lower bounds on all parameters. 

In light of this discussion, another key motivation of our work has been to use the $h$-index problem as a \emph{medium} for designing general techniques 
for obtaining asymptotic bounds for sublinear time algorithms in general. For instance, our algorithm involves careful subroutines that side-step typical ``binary search'' approaches in prior work that results in additional $O(\eps^{-1} \cdot \log{n})$ terms
in the runtimes of algorithms; another example is a more careful analysis of the error that bypasses a trivial union bound which leads to additional $O(\log{n})$ factors. More importantly, we design a new 
technique, based on a new query complexity result that we establish, that allows us to prove lower bounds that depend on both parameters $\eps$ and $\delta$. This approach can now be used to replace 
prior sublinear time lower bounds both based on ad-hoc arguments as in~\cite{EdenLRS15} or communication complexity as in~\cite{EdenR18,AssadiKK19}. As a result, 
we also obtain asymptotically optimal lower bounds for the problem of counting triangles in a graph that now matches the dependence on $\eps$ and $\delta$ as well, improving upon
the prior work in~\cite{EdenLRS15,EdenR18b,AssadiKK19}.

% !TeX root = main.tex 
%!TEX root = main.tex

% !TeX root = main.tex 
%!TEX root = main.tex

\section{Preliminaries}\label{sec:prelim}

\paragraph{Notation.} For any integer $t \geq 1$, we define $[t] := \set{1,2,\ldots,t}$. For any $p \in (0,1)$, we use $\Bern{p}$ to denote the \emph{Bernoulli} distribution with mean $p$. For a set $S$ of integers,
we write $i \in_R S$ to mean $i$ is chosen uniformly at random from $S$.

\subsection{Basics of Query Complexity}

We use the basics of query complexity to establish our lower bounds on the runtime of sublinear algorithms (as the number of queries made to the input is always a lower bound on the runtime). 

Let $f: \set{0,1}^{n} \mapsto \set{0,1}$ be any Boolean function. A query algorithm for $f$ on any input $x$ can \emph{query} the values of $x_i$ for $i \in [n]$ and determine the value of $f(x)$ with a minimal number of queries. We will work
with the following definitions: 
\begin{itemize}
	\item \textbf{Randomized query complexity}: For any $\delta \in (0,1)$, $\R{f}{\delta}$ denotes the worst-case number of queries made by the best \underline{randomized} algorithm that computes $f$ on any input with probability of success at least $1-\delta$. 
	\item \textbf{Distributional query complexity}: For any $\delta \in (0,1)$ and any distribution $\mu$ on $\set{0,1}^n$, $\D{f}{\mu}{\delta}$ denotes the worst-case number of queries made by the best \underline{deterministic} algorithm that
	computes $f$ on inputs sampled from $\mu$ with probability of success at least $1-\delta$. 
\end{itemize}
Yao's minimax principle~\cite{Yao77} relates these two measures. 
\begin{proposition}[Yao's minimax principle~\cite{Yao77}] \label{prop:yao}
For any $f: \set{0,1}^{n} \mapsto \set{0,1}$ and  $\delta \in (0,1)$:  
\begin{enumerate}[label=$(\roman*)$]
\item \emph{Easy direction (averaging argument):} For any distribution $\mu$ on $\set{0,1}^n$, $\D{f}{\mu}{\delta} \leq \R{f}{\delta}$. 
\item \emph{Hard direction (duality):} There is some distribution $\mu^*$ on $\set{0,1}^n$ such that $\D{f}{\mu^*}{\delta} = \R{f}{\delta}$. 
\end{enumerate}
\end{proposition}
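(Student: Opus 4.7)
The plan is to handle the two parts separately, with the easy direction via averaging and the hard direction via the minimax theorem applied to a finite zero-sum game.

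For part $(i)$, I would start by observing that a randomized query algorithm is (by definition) just a probability distribution over deterministic query algorithms. Let $\ALG$ be any randomized algorithm achieving $\R{f}{\delta}$ queries, so for every fixed $x \in \set{0,1}^n$, the randomized success probability is at least $1-\delta$. Taking the expectation over $x \sim \mu$ and swapping the order of the two expectations (finite in both coordinates), the expected success probability when inputs are drawn from $\mu$ is also at least $1-\delta$. Hence there must exist at least one deterministic algorithm $D^*$ in the support of $\ALG$ whose success probability under $\mu$ is at least $1-\delta$, and this $D^*$ makes at most $\R{f}{\delta}$ queries. This gives $\D{f}{\mu}{\delta} \leq \R{f}{\delta}$.

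For part $(ii)$, the strategy is a standard LP-duality/minimax argument. Fix an integer $q \geq 0$ and note that the set $\mathcal{D}_q$ of deterministic algorithms making at most $q$ queries on any input is \emph{finite} (each such algorithm is a decision tree of depth $q$ with leaves labeled in $\set{0,1}$, over an input alphabet of size $2$). Consider the finite two-player zero-sum game with row player choosing $D \in \mathcal{D}_q$, column player choosing $x \in \set{0,1}^n$, and payoff $\bOne[D(x) = f(x)]$. Since both strategy sets are finite, von Neumann's minimax theorem gives
\[
\max_{\ALG} \min_{x} \Pr_{D \sim \ALG}[D(x) = f(x)] \;=\; \min_{\mu} \max_{D \in \mathcal{D}_q} \Pr_{x \sim \mu}[D(x) = f(x)],
\]
where $\ALG$ ranges over distributions on $\mathcal{D}_q$ and $\mu$ over distributions on $\set{0,1}^n$. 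Now take $q = \R{f}{\delta}$: by definition the left-hand side is at least $1-\delta$, so the right-hand side is as well, and there exists a distribution $\mu^*$ with $\max_{D \in \mathcal{D}_q} \Pr_{x \sim \mu^*}[D(x) = f(x)] \geq 1-\delta$. This means some deterministic $q$-query algorithm succeeds with probability at least $1-\delta$ on $\mu^*$-sampled inputs, so $\D{f}{\mu^*}{\delta} \leq q = \R{f}{\delta}$. Combined with part $(i)$ this yields the claimed equality.

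The main obstacle I anticipate is purely a matter of careful bookkeeping: one has to ensure the set $\mathcal{D}_q$ of $q$-query deterministic algorithms is genuinely finite (so that the elementary form of minimax applies without measure-theoretic complications) and that randomized algorithms for $f$ can without loss of generality be viewed as distributions over exactly this finite set. Both facts follow from modeling deterministic algorithms as depth-$q$ decision trees, at which point the minimax theorem applies verbatim and no additional analytic arguments are needed.
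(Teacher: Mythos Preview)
The paper does not prove this proposition; it is stated with a citation to~\cite{Yao77} and used as a black box, so there is nothing on the paper's side to compare against. Your part~$(i)$ is the standard averaging argument and is correct.

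Your part~$(ii)$, however, is backwards. You set $q = \R{f}{\delta}$ and deduce from minimax that $\min_{\mu} \max_{D \in \mathcal{D}_q} \Pr_{x\sim\mu}[D(x)=f(x)] \geq 1-\delta$. But this is a statement about \emph{every} $\mu$: for each distribution there is a $q$-query deterministic algorithm with success $\geq 1-\delta$, i.e., $\D{f}{\mu}{\delta} \leq q$. That is just part~$(i)$ again and only gives $\D{f}{\mu^*}{\delta} \leq \R{f}{\delta}$, not the reverse inequality you need for equality. Your final sentence ``combined with part~$(i)$ this yields the claimed equality'' combines two copies of the same inequality.

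The fix is to instantiate $q = \R{f}{\delta} - 1$. By definition of $\R{f}{\delta}$, no randomized $q$-query algorithm succeeds with probability $\geq 1-\delta$ on every input, so the left-hand side is strictly below $1-\delta$. Minimax then gives $\min_{\mu} \max_{D\in\mathcal{D}_q} \Pr_{x\sim\mu}[D(x)=f(x)] < 1-\delta$, and the minimizing $\mu^*$ (which exists since the domain is compact and the objective continuous) satisfies $\max_{D\in\mathcal{D}_q} \Pr_{x\sim\mu^*}[D(x)=f(x)] < 1-\delta$. Hence no deterministic $q$-query algorithm achieves error $\leq \delta$ on $\mu^*$, so $\D{f}{\mu^*}{\delta} > q$, i.e., $\D{f}{\mu^*}{\delta} \geq \R{f}{\delta}$; together with part~$(i)$ this gives the equality.
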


\subsection{Basic Probabilistic Tools}

We use the linearity of variance of independent random variables. 

\begin{fact}\label{fact:variance}
	For any two \textbf{independent} random variables $X$ and $Y$, 
		$\var{X+Y} = \var{X} + \var{Y}$. 
\end{fact}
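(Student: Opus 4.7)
The plan is to prove this by direct expansion of the definition of variance, using independence at exactly one step to decouple a cross term. Specifically, I would start from the identity $\var{Z} = \Exp[Z^2] - (\Exp[Z])^2$ applied to $Z = X + Y$, expand the square, and use linearity of expectation to split into individual second moments, a cross moment $\Exp[XY]$, and the corresponding product of first moments.

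First I would write
\[
\var{X+Y} = \Exp[(X+Y)^2] - (\Exp[X+Y])^2 = \Exp[X^2] + 2\Exp[XY] + \Exp[Y^2] - (\Exp[X])^2 - 2\Exp[X]\Exp[Y] - (\Exp[Y])^2.
\]
Then I would invoke independence of $X$ and $Y$, which gives $\Exp[XY] = \Exp[X] \cdot \Exp[Y]$, so that the two cross terms $2\Exp[XY]$ and $-2\Exp[X]\Exp[Y]$ cancel exactly. After the cancellation the remaining expression regroups as $(\Exp[X^2] - (\Exp[X])^2) + (\Exp[Y^2] - (\Exp[Y])^2)$, which by definition equals $\var{X} + \var{Y}$.

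The only nontrivial ingredient is the independence-based identity $\Exp[XY] = \Exp[X]\Exp[Y]$, which I would treat as a standard consequence of the definition of independence (and which in fact holds under the weaker assumption that $X$ and $Y$ are uncorrelated). Since the statement is phrased for two independent variables, no additional subtleties arise; there is no real obstacle, just bookkeeping of the six terms in the expansion and observing the cancellation. One could also note in passing that the argument extends by induction to any finite collection of pairwise independent (indeed, pairwise uncorrelated) random variables, which is the form the fact will likely be applied in elsewhere in the paper.
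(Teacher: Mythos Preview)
Your proof is correct and entirely standard. The paper itself does not prove this statement; it is recorded as a basic probabilistic fact in the preliminaries and used without proof, so there is no ``paper's own proof'' to compare against.
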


\noindent
The following proposition lists the standard concentration inequalities we use in this paper. 

\begin{proposition}[Concentration Inequalities; cf.~\cite{DubhashiP09}]\label{prop:conc}
~
\begin{enumerate}[label=$(\roman*)$]
	\item \emph{Chebyshev's inequality:} For any random variable $X$ and $t > 0$, 
	\begin{align*}
		\Pr\paren{\card{X - \expect{X}} \geq t} \leq \frac{\var{X}}{t^2}. 
	\end{align*}
	\item \emph{Chernoff bound:} Suppose $X_1,\ldots,X_n$ are $n$ independent random variables in $[0,1]$ and define $X := \sum_{i=1}^{n} X_i$. Then, for any $\eps \in (0,1)$ and $\mu \geq \expect{X}$, 
	\[
		\hspace{-15pt} \Pr\paren{X > (1+\eps) \cdot \mu} \leq \exp\paren{-\frac{\eps^2 \cdot \mu}{3}} ~ \textnormal{and} ~  \Pr\paren{X < (1-\eps) \cdot \mu} \leq \exp\paren{-\frac{\eps^2 \cdot \mu}{3}}.
	\]
	Moreover, for any $t \geq 1$ and $\mu \geq \expect{X}$, 
	$
	\Pr\paren{\card{X - \expect{X}} \geq t \cdot \mu} \leq 2 \cdot \exp\paren{-\frac{t \cdot \mu}{3}}.
	$ 
\end{enumerate}
\end{proposition}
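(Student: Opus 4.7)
The plan is to derive both parts from the \emph{moment method}: reduce each tail bound to an expectation bound via Markov's inequality applied to a non-negative function of the random variable, and then evaluate or bound that expectation.

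For part $(i)$ (Chebyshev), I would apply Markov's inequality to the non-negative random variable $Y := (X - \expect{X})^2$. Since the event $\set{\card{X-\expect{X}} \geq t}$ coincides with $\set{Y \geq t^2}$, Markov yields $\prob{Y \geq t^2} \leq \expect{Y}/t^2$, and $\expect{Y} = \var{X}$ by definition. This is essentially a one-line argument once Markov is in hand.

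For part $(ii)$ (Chernoff), the strategy is the exponential moment method. For the upper tail, fix $s > 0$ and bound $\prob{X > (1+\eps)\mu} \leq e^{-s(1+\eps)\mu} \cdot \expect{e^{sX}}$ by Markov applied to $e^{sX}$. Independence of the $X_i$'s (and hence of $e^{sX_i}$) gives $\expect{e^{sX}} = \prod_i \expect{e^{sX_i}}$; since $X_i \in [0,1]$, convexity of $x \mapsto e^{sx}$ yields $e^{sX_i} \leq 1 + X_i(e^s-1)$, so $\expect{e^{sX_i}} \leq 1 + \expect{X_i}(e^s-1) \leq \exp(\expect{X_i}(e^s-1))$. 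Multiplying and using $\mu \geq \expect{X}$ gives $\expect{e^{sX}} \leq \exp(\mu(e^s-1))$. Optimizing by choosing $s = \ln(1+\eps)$ produces the exponent $\mu(\eps - (1+\eps)\ln(1+\eps))$, and the scalar bound $(1+\eps)\ln(1+\eps) - \eps \geq \eps^2/3$ on $(0,1)$ converts this into the claimed $\exp(-\eps^2\mu/3)$. The lower tail is analogous with $s < 0$, using the symmetric inequality $(1-\eps)\ln(1-\eps) + \eps \geq \eps^2/3$. The final ``moreover'' statement for $t \geq 1$ follows from the same MGF argument but optimizing $s$ in the unbounded regime, which gives a linear-in-$t$ exponent; here one keeps the weaker $\exp(-t\mu/3)$ rather than $\exp(-t^2\mu/3)$ precisely because $t$ may exceed $1$. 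The two-sided form picks up the factor of $2$ by a union bound over the symmetric upper and lower tails.

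The main obstacle is not conceptual but purely calculational: verifying the two scalar log-inequalities on the auxiliary exponent functions, which can be handled either by comparing Taylor expansions or by checking signs of derivatives on the relevant intervals. Everything else reduces to the standard template of Markov $+$ independence $+$ a tight choice of the free parameter $s$, and since \Cref{fact:variance} already affords linearity of variance, no additional machinery beyond what the excerpt provides is needed.
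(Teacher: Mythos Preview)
Your proposal is correct and follows the standard textbook derivation (Markov on $(X-\expect{X})^2$ for Chebyshev; the MGF method with the optimal choice $s=\ln(1\pm\eps)$ and the scalar log-inequalities for Chernoff). The paper itself does not prove this proposition at all---it is stated as a cited fact from~\cite{DubhashiP09} and used as a black box throughout---so there is no ``paper's proof'' to compare against; your write-up supplies exactly the kind of self-contained argument the reference would contain.
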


\subsection{Measures of Distance Between Distributions}\label{prelim-sec:prob-distance}

We use two main measures of distance (or divergence) between distributions, namely  the \emph{total variation distance} and the \emph{Kullback-Leibler divergence} (KL-divergence). 

\paragraph{Total variation distance.} We denote the total variation distance between two distributions $\mu$ and $\nu$ on the same 
support $\Omega$ by $\tvd{\mu}{\nu}$, defined as: 
\begin{align}
\tvd{\mu}{\nu}:= \max_{\Omega' \subseteq \Omega} \paren{\mu(\Omega')-\nu(\Omega')} = \frac{1}{2} \cdot \sum_{x \in \Omega} \card{\mu(x) - \nu(x)}.  \label{prelim-eq:tvd}
\end{align}
\noindent
We use the following basic property of total variation distance. 
\begin{fact}\label{fact:tvd-sample}
	Given a single sample, s, chosen uniformly from one of the distributions $\mu$ or $\nu$, the best probability of successfully deciding whether $s$ came from $\mu$ or $\nu$ 
	is $\frac12 + \frac12\cdot\tvd{\mu}{\nu}$.
\end{fact}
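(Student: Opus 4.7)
The plan is to identify a best deciding rule with a subset $\Omega_{\mu} \subseteq \Omega$ (those samples for which the algorithm outputs ``came from $\mu$'') and then express the success probability in closed form in terms of this set, so that the total variation distance appears naturally via its variational characterization in~\eqref{prelim-eq:tvd}. Since any randomized rule is a convex combination of deterministic ones, its success probability is at most the best deterministic one, so I can restrict attention to deterministic deciders without loss of generality.

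First I would write, for any deterministic rule defined by such a set $\Omega_\mu$, the success probability as
\[
\Pr(\text{success}) = \frac12 \cdot \mu(\Omega_\mu) + \frac12 \cdot \nu(\Omega \setminus \Omega_\mu) = \frac12 + \frac12 \cdot \bigl(\mu(\Omega_\mu) - \nu(\Omega_\mu)\bigr),
\]
using that $\nu(\Omega \setminus \Omega_\mu) = 1 - \nu(\Omega_\mu)$ and the fact that $s$ is drawn uniformly from either $\mu$ or $\nu$ with probability $1/2$ each. Next, I would upper bound $\mu(\Omega_\mu) - \nu(\Omega_\mu) \leq \max_{\Omega' \subseteq \Omega}\bigl(\mu(\Omega') - \nu(\Omega')\bigr) = \tvd{\mu}{\nu}$ by the first equality in~\eqref{prelim-eq:tvd}, which gives the $\leq$ direction of the claim.

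For the matching lower bound, I would exhibit an explicit deterministic decider achieving the bound by picking $\Omega_\mu^{*} := \{ s \in \Omega : \mu(s) \geq \nu(s)\}$, which is precisely the maximizer in the variational definition of $\tvd{\mu}{\nu}$; plugging back into the formula above gives the matching lower bound of $\tfrac12 + \tfrac12 \cdot \tvd{\mu}{\nu}$.

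I do not expect any real obstacle here: the only minor care is in reducing randomized deciders to deterministic ones (done by averaging: a randomized rule succeeds with probability equal to an expectation over deterministic rules, so at least one deterministic rule matches or exceeds it). Once that reduction is in place, the proof is a direct manipulation of~\eqref{prelim-eq:tvd}, and the ``hardest'' step is merely recognizing that the two sides of the variational formula—upper and lower bounds—line up exactly.
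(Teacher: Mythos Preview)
Your proof is correct and is the standard argument for this folklore result. Note, however, that the paper does not actually provide a proof: \Cref{fact:tvd-sample} is stated as a basic fact in the preliminaries without justification, so there is no ``paper's own proof'' to compare against. Your write-up would serve perfectly well as the missing proof.
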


\paragraph{KL-divergence.} For two distributions $\mu$ and $\nu$ over the same probability space, the \emph{Kullback-Leibler divergence} between $\mu$ and $\nu$ is denoted by $\kl{\mu}{\nu}$ and defined as: 
\begin{align}
\kl{\mu}{\nu}:= \Ex_{a \sim \mu}\Bracket{\log\frac{\Pr_\mu(a)}{\Pr_{\nu}(a)}}. \label{prelim-eq:kl}
\end{align}
\noindent
A key property of KL-divergence is that it satisfies a chain rule. 

\begin{fact}[Chain rule for KL-divergence; c.f.~\cite{CoverT06}]\label{fact:chainRuleKL}
Given two distributions $p(x_1,\ldots,x_t)$ and $q(x_1,\ldots,x_t)$ on $t$-tuples, we have, 
\[
	\kl{p}{q} = \sum_{i=1}^{t} \Exp_{p(x_{<i})} \kl{p(x_i \mid x_{<i})}{q(x_i \mid x_{<i})}.
\]
In particular, if $p$ and $q$ are product distributions, then,
$
	\kl{p}{q} = \sum_{i=1}^{t} \kl{p(x_i)}{q(x_i)}.
$
\end{fact}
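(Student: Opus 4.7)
The plan is a direct calculation from the definitions, using the chain rule for joint probabilities at the heart of the argument. Starting from
\[
\kl{p}{q} = \Ex_{x \sim p}\Bracket{\log\frac{p(x_1,\ldots,x_t)}{q(x_1,\ldots,x_t)}},
\]
I would factorize each joint distribution as $p(x_1,\ldots,x_t) = \prod_{i=1}^{t} p(x_i \mid x_{<i})$ and analogously for $q$. Taking logarithms converts the product into a sum, so the log-ratio inside the expectation decomposes as $\sum_{i=1}^{t} \log\bigl( p(x_i \mid x_{<i}) / q(x_i \mid x_{<i}) \bigr)$.

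Next I would pull the finite sum outside the expectation by linearity. For the $i$-th summand, since only the prefix $x_{\leq i}$ appears in the term, the tower property of expectation lets me write the expectation over $x \sim p$ as an outer expectation over $x_{<i}$ drawn from its $p$-marginal, followed by an inner expectation over $x_i$ drawn from $p(x_i \mid x_{<i})$ (with the remaining coordinates integrating out trivially). By the definition of KL-divergence applied to the conditional distributions, the inner expectation is exactly $\kl{p(x_i \mid x_{<i})}{q(x_i \mid x_{<i})}$, which produces the claimed identity.

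For the specialization to product distributions, I simply observe that independence gives $p(x_i \mid x_{<i}) = p(x_i)$ and $q(x_i \mid x_{<i}) = q(x_i)$, so each inner KL-divergence is constant in $x_{<i}$ and the outer expectation collapses, yielding $\sum_{i=1}^{t} \kl{p(x_i)}{q(x_i)}$.

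There is no genuine obstacle here, as this is a textbook identity; the only minor care required is in bookkeeping the standard conventions $0 \log 0 = 0$ and $p \log(p/0) = +\infty$ when $p > 0$. In the boundary case where $p$ is not absolutely continuous with respect to $q$, both sides of the identity are simultaneously $+\infty$, which can be verified by isolating any coordinate $i$ and prefix $x_{<i}$ on which the conditional KL-divergence diverges.
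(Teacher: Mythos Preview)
Your argument is correct and is exactly the standard textbook derivation of the chain rule for KL-divergence. The paper does not actually supply its own proof of this statement; it is recorded as a Fact with a citation to Cover and Thomas, so there is nothing further to compare against.
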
 
The following gives a simple upper bound for the KL-divergence of two Bernoulli distributions. 

\begin{proposition}[KL-divergence on Bernoulli distributions; c.f.~{\cite[Theorem 5]{GibbsS02}}] \label{prop:bernKL} 
For any $0 < p,q < 1$, the following is true: 
\begin{align*}
\kl{\Bern{p}}{\Bern{q}} \leq \dfrac{(p-q)^2}{q \cdot (1-q)}.
\end{align*}
\end{proposition}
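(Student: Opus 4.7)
The plan is to expand the Bernoulli KL-divergence by its definition and then apply the elementary inequality $\ln(1+x)\leq x$ (valid for every $x > -1$) to each of the two logarithmic terms separately. This should immediately yield the target bound after one round of algebraic simplification.

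Concretely, I would first write
\[
\kl{\Bern{p}}{\Bern{q}} \;=\; p\,\log\frac{p}{q} + (1-p)\,\log\frac{1-p}{1-q},
\]
and then rewrite $\log(p/q) = \log\bigl(1 + (p-q)/q\bigr)$ and $\log((1-p)/(1-q)) = \log\bigl(1 + (q-p)/(1-q)\bigr)$. Applying $\log(1+x)\leq x$ term-by-term upper bounds the expression by $p\cdot\tfrac{p-q}{q} + (1-p)\cdot\tfrac{q-p}{1-q}$, and a common factor of $(p-q)$ pulls out cleanly.

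The remaining step is to simplify the bracketed expression $\tfrac{p}{q} - \tfrac{1-p}{1-q}$ over the common denominator $q(1-q)$. The numerator is $p(1-q) - q(1-p) = p - q$, so the bracket collapses to $(p-q)/\bigl(q(1-q)\bigr)$, and multiplying by the factored $(p-q)$ delivers the claimed $(p-q)^2 / \bigl(q(1-q)\bigr)$ bound.

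There is essentially no obstacle here; the entire argument is a two-line calculation once the right algebraic form is chosen. The only point worth flagging is the base of the logarithm: the inequality $\log(1+x)\leq x$ is used as a bound on the natural logarithm, which matches the convention under which the Gibbs–Su statement is typically phrased. If one reads $\log$ in a different base, the proof goes through unchanged up to a constant factor that only strengthens the inequality in the direction needed, so the bound is not affected.
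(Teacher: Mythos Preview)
Your argument is correct. The paper does not supply its own proof of this proposition; it simply cites the bound from Gibbs--Su, so there is nothing to compare against beyond noting that your elementary derivation via $\ln(1+x)\leq x$ is a standard and clean way to obtain exactly this inequality.

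One small correction to your closing remark: your claim that a change of logarithm base ``only strengthens the inequality in the direction needed'' is backwards. If $\log$ were read as $\log_2$, the KL-divergence would be larger by a factor of $1/\ln 2 > 1$, so the stated bound would \emph{not} automatically hold without the same constant adjustment on the right-hand side. In this paper the convention is indeed the natural logarithm (the bound is later fed into an expression of the form $\exp(-\kl{\cdot}{\cdot})$ via \Cref{prop:pinsker}), so your proof applies as written; just drop the incorrect parenthetical about other bases.
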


We shall also use the following extension of Pinsker's inequality. % to relate total variation distance and KL-divergence. 

\begin{proposition}[c.f.~{\cite[p. 88-89]{paraEstimation}}]\label{prop:pinsker} 
Given distributions $\mu$ and $\nu$ over a discrete support, 
\[
\tvd{\mu}{\nu} \leq 1 - \dfrac{1}{2} \exp\paren{-\kl{\mu}{\nu}}.
\] 
\end{proposition}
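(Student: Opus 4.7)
The plan is to first establish the (stronger) \emph{Bretagnolle--Huber inequality} $\tvd{\mu}{\nu} \leq \sqrt{1 - \exp(-\kl{\mu}{\nu})}$, and then deduce the claim from the elementary bound $\sqrt{1-x} \leq 1 - x/2$ on $[0,1]$. (This bound holds because $f(x) := 1 - x/2 - \sqrt{1-x}$ satisfies $f(0) = f'(0) = 0$ and is convex on $[0,1)$.)

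The first main step is to relate $\kl{\mu}{\nu}$ to the \emph{Bhattacharyya coefficient} $B(\mu,\nu) := \sum_x \sqrt{\mu(x)\nu(x)}$. Jensen's inequality applied to the convex function $-\log$ gives
\[
\tfrac{1}{2}\kl{\mu}{\nu} = \Ex_{a \sim \mu}\bracket{-\log\sqrt{\nu(a)/\mu(a)}} \geq -\log \Ex_{a \sim \mu}\bracket{\sqrt{\nu(a)/\mu(a)}} = -\log B(\mu,\nu),
\]
so that $B(\mu,\nu) \geq \exp(-\kl{\mu}{\nu}/2)$.

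The second main step upper-bounds $B(\mu,\nu)$ in terms of $\tvd{\mu}{\nu}$. Let $A := \set{x : \mu(x) \leq \nu(x)}$, so that $\mu(A) + \nu(A^c) = 1 - \tvd{\mu}{\nu}$ by the definition of total variation distance. Splitting the Bhattacharyya sum over $A$ and $A^c$ and applying Cauchy--Schwarz on each part gives $B(\mu,\nu) \leq \sqrt{\mu(A)\nu(A)} + \sqrt{\mu(A^c)\nu(A^c)}$. Squaring this and bounding the cross term by AM--GM in the form $2\sqrt{\mu(A)\nu(A)\mu(A^c)\nu(A^c)} \leq \mu(A)\nu(A^c) + \mu(A^c)\nu(A)$ regroups the four terms as
\[
B(\mu,\nu)^2 \leq \paren{\mu(A)+\nu(A^c)} \cdot \paren{\mu(A^c)+\nu(A)} = (1-\tvd{\mu}{\nu})(1+\tvd{\mu}{\nu}) = 1 - \tvd{\mu}{\nu}^2.
\]
Combining this with the Bhattacharyya bound yields $\exp(-\kl{\mu}{\nu}) \leq 1 - \tvd{\mu}{\nu}^2$, i.e., Bretagnolle--Huber, and the claim follows.

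The main obstacle is not any individual inequality (each is a textbook application of Jensen, Cauchy--Schwarz, or AM--GM) but rather choosing exactly the right grouping in the AM--GM step so that the cross term pairs $\mu(A)$ with $\nu(A^c)$, whose sum equals $1 - \tvd{\mu}{\nu}$. The more obvious grouping (pairing the $\mu$-terms together and the $\nu$-terms together) collapses to the useless bound $B(\mu,\nu) \leq 1$ and loses all quantitative dependence on $\tvd{\mu}{\nu}$.
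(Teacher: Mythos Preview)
Your overall strategy is correct and is in fact the standard route to this inequality (the paper itself does not give a proof; the proposition is simply cited from Tsybakov). The reduction to Bretagnolle--Huber via $\sqrt{1-x}\leq 1-x/2$, the Jensen step relating KL to the Bhattacharyya coefficient, and the Cauchy--Schwarz split over $A$ and $A^c$ are all fine.

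However, you have the two AM--GM groupings exactly backwards. Writing $a=\mu(A)$, $b=\nu(A)$, $c=\mu(A^c)$, $d=\nu(A^c)$, the cross term is $2\sqrt{abcd}$. The grouping you wrote, $2\sqrt{(ad)(bc)} \leq ad+bc = \mu(A)\nu(A^c)+\mu(A^c)\nu(A)$, yields
\[
B(\mu,\nu)^2 \leq ab+cd+ad+bc = (a+c)(b+d) = (\mu(A)+\mu(A^c))(\nu(A)+\nu(A^c)) = 1,
\]
which is precisely the useless collapse you warn about. The grouping that actually works is the one you dismissed as ``obvious'': pair the $\mu$-terms and the $\nu$-terms, i.e., $2\sqrt{(ac)(bd)} \leq ac+bd = \mu(A)\mu(A^c)+\nu(A)\nu(A^c)$. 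Then
\[
B(\mu,\nu)^2 \leq ab+cd+ac+bd = (a+d)(b+c) = \paren{\mu(A)+\nu(A^c)}\paren{\nu(A)+\mu(A^c)} = 1-\tvd{\mu}{\nu}^2,
\]
as desired. So the argument is sound once you swap the roles of the two groupings in your write-up (and in your closing remark about which one is the ``right'' choice).
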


% !TeX root = main.tex 
%!TEX root = main.tex

\section{The Algorithm}\label{sec:alg}

We describe our main algorithm for the $h$-index problem in this section. %In particular, we prove the following theorem which formalizes the first part of~\Cref{res:main}. 

\begin{theorem}\label{thm:alg}
There exists a sublinear time algorithm that, given query access to an integer array $A[1:n]$, approximation and confidence parameters $\vareps,\delta \in (0,1)$, with probability at least $1-\delta$ outputs an estimate $\tilde{h}$ of $\hindex A$ such that $|\tilde{h} - \hindex A| \leq \vareps \cdot \hindex A$ in $O(\dfrac{n \cdot \ln(1/\delta)}{\varepsilon^2 \cdot \hindex A})$ time. 
\end{theorem}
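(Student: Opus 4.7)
The plan is to build a sampling-based estimator for the h-index, handle the unknown value of $\hindex{A}$ through adaptive sample-size scaling, and pay careful attention to avoid any spurious logarithmic overhead.

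\textbf{Sampling estimator.} Let $F(t) := |\{i \in [n] : A[i] \geq t\}|$, so that $\hindex{A}$ is exactly the largest $t$ satisfying $F(t) \geq t$. Given a uniform i.i.d.\ sample $S \subseteq [n]$ of size $s$, define the plug-in estimator $\widehat F_s(t) := (n/s) \cdot |\{i \in S : A[i] \geq t\}|$ and the induced empirical h-index $\tilde h(S)$ as the largest integer $t$ with $\widehat F_s(t) \geq t$. The random variable $\widehat F_s(h) \cdot s/n$ is a sum of $s$ i.i.d.\ Bernoulli indicators with mean $\geq h/n$, so a Chernoff bound (\Cref{prop:conc}) shows that $s = \Theta(n \ln(1/\delta)/(\eps^2 \cdot h))$ is enough to ensure $\widehat F_s(t)/n \in (1 \pm \eps)\cdot F(t)/n$ for every threshold $t$ in a small neighborhood of $h$ with probability $\geq 1-\delta$. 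The monotonicity of $F$, together with a direct comparison of $\widehat F_s(t)$ against $t$ around the crossing point, then yields $|\tilde h(S) - h| \leq O(\eps)\cdot h$.

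\textbf{Handling unknown $h$.} Since $h$ is not known a priori, I would implement an adaptive scheme with geometrically growing sample sizes $s_i \propto 2^{i}\ln(1/\delta)/\eps^2$, reusing samples across rounds. In round $i$ compute $\tilde h_i := \tilde h(S_i)$ and check the self-certifying condition $s_i \cdot \tilde h_i / n \geq c \ln(1/\delta)/\eps^2$ for a suitable constant $c$. Because the ``ideal'' budget $s^* = \Theta(n \ln(1/\delta)/(\eps^2 h))$ satisfies $s^* \cdot h / n = \Theta(\ln(1/\delta)/\eps^2)$, this check first triggers at round $i^* = O(\log(n/h))$, and the geometric growth makes the final round's sample count dominate, meeting the target $O(n \ln(1/\delta)/(\eps^2 \cdot \hindex{A}))$. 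Computing $\tilde h_i$ from $S_i$ itself can be done in $O(s_i)$ time by scanning the sample values in sorted (or capped counting-sorted) order rather than binary-searching over thresholds, which side-steps the $O(\eps^{-1}\log n)$ overhead flagged in the introduction.

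\textbf{The hard part.} The main obstacle is eliminating \emph{every} extra $\log$-factor, since the paper explicitly advertises optimality in all four of $n,h,\eps,\delta$. Two subtleties dominate: (i) naively union-bounding per-round failure probabilities over the $O(\log(n/h))$ rounds introduces an extra $\log\log n$ (or even $\log n$) term inside $\ln(1/\delta)$, and (ii) naively union-bounding the deviation of $\widehat F_s(t)$ over all thresholds $t$ introduces another $\log n$ over $t$. For (ii), the monotonicity of $F$ plus the fact that the ``relevant'' window of thresholds near $h$ has width only $O(\eps h)$ allows controlling the error with a single Chernoff inequality rather than a union bound. For (i), the key is a more delicate analysis of the stopping rule that does not treat rounds independently: one shows that the stopping check first fires at essentially the right round $i^*$ with total failure probability $\delta$ by using that (a) earlier rounds need only a one-sided guarantee (``do not stop too early'') which is obtainable at much weaker confidence, and (b) only the concentration in the final round enters the final $(1\pm\eps)$-error, so the effective confidence parameter in the last round can remain $\delta$ without blowing up the sample count. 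Making this tight bookkeeping work, while simultaneously certifying that $\tilde h_{i^*}$ is within $(1\pm\eps)\cdot\hindex{A}$, is where the proof will require the most care.
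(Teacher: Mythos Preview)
Your plan is correct and lands close to the paper, though the decomposition differs. The paper splits the work into two explicit subroutines: a \emph{weak estimator} that, given a threshold $T$, tests whether $\hindex{A}\geq T$ versus $\hindex{A}<T/4$ in $O(n/T)$ time, and a separate \emph{strong estimator} that, given a valid lower bound $T\leq\hindex{A}$, returns a $(1\pm\eps)$-estimate in $O(n/(\eps^2 T))$ time with constant success probability. The outer algorithm runs the weak estimator on $T=n,n/4,n/16,\ldots$ (each amplified by majority over $O(\ln(1/\delta))$ repeats), then feeds the resulting $T$ into the strong estimator (amplified by median). Your unified scheme---one estimator with a geometrically growing sample and a self-certifying stop rule---is essentially isomorphic: doubling $s_i$ corresponds to quartering $T$, and your stopping test $s_i\tilde h_i/n\geq c\ln(1/\delta)/\eps^2$ is equivalent to checking $\tilde h_i$ against a shrinking threshold.

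The one place your write-up is vaguer than the paper is exactly the point you flag as hardest, item~(i). You say early rounds need only a ``one-sided guarantee obtainable at much weaker confidence,'' but the paper makes this quantitative: its weak estimator satisfies the \emph{asymmetric} bound $\Pr(\text{output \emph{Large}}\mid \hindex{A}<T/4)\leq \hindex{A}/(4T)$, so the false-stop probability at threshold $T=4^{\ell}T^*$ decays like $4^{-\ell}$. After majority amplification this becomes $(\delta/8)^{\ell-1}$, and summing over $\ell\geq 2$ gives $O(\delta)$ with no $\log$-over-rounds factor. In your language, the crucial observation is not just that the early-round guarantee is one-sided, but that the failure probability at a round whose implicit threshold is a factor $4^{\ell}$ above $\hindex{A}$ is itself geometrically small in $\ell$; that is what makes the union bound over rounds converge. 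Your computation of $\tilde h_i$ in $O(s_i)$ time matches the paper's \texttt{discounted-h-index} routine, and your handling of~(ii) via monotonicity and two thresholds $(1\pm\eps)h$ is exactly what the paper does in Claims~3.7--3.8.
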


The algorithm in~\Cref{thm:alg} is a combination of a ``weak'' and ``strong'' estimator that we design. The weak estimator only outputs whether $\hindex A$ is at least as large as a given threshold, but it is efficient and can be used to provide a lower bound on $\hindex A$. The strong estimator, which has a slower runtime, then uses the lower bound to output an estimate of $\hindex A$. In the next two subsections, we present these two estimators and then conclude the proof of~\Cref{thm:alg} through a careful combination of them that preserves the asymptotic runtime of the overall algorithm. 
\subsection{A Weak Estimator}

We present an algorithm that determines with high probability whether $\hindex A$ is at least as large as a given threshold. 

\begin{lemma} \label{lem:weakEstimator}
There exists a sublinear time algorithm that, given query access to an integer array $A[1:n]$ and an integer $T \geq 1$, in $O(n/T)$ time outputs an answer satisfying the following: 
\begin{enumerate}[label=$(\roman*)$]
	\item if $\hindex{A} \geq T$, the answer is \emph{Large} with probability at least $1-1/16$;
	\item if $\hindex{A} < T/4$, the answer is \emph{Small} with probability at least $1-{\hindex{A}}/{(4T)}$; 
	\item either \emph{Small} or \emph{Large} can be outputted in the remaining cases. 
\end{enumerate}
\end{lemma}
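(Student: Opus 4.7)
The plan is a straightforward sampling algorithm: sample $s = c_1 \cdot n/T$ indices of $A$ uniformly and independently at random (with replacement) for a large constant $c_1$, count the number $X$ of sampled indices $i$ with $A[i] \geq T$, and output \emph{Large} iff $X \geq c_2$ for a suitably chosen constant $c_2 < c_1$. The running time is clearly $O(s) = O(n/T)$.

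The analysis rests on the simple structural observation that when $\hindex{A} < T$, the number $k$ of indices with $A[i] \geq T$ satisfies $k \leq \hindex{A}$: otherwise, setting $h := \min(k,T) \geq \hindex{A}+1$ would yield $\geq h$ indices with value $\geq h$, contradicting the maximality of $\hindex{A}$. In case (ii) this gives $\E[X] = sk/n \leq c_1 \hindex{A}/T < c_1/4$, while in case (i) the hit count is $k \geq T$ so $\E[X] \geq c_1$. Case (i) then follows immediately from the Chernoff lower tail in \Cref{prop:conc}: $\Pr\paren{X < c_2} \leq \exp\paren{-c_1(1-c_2/c_1)^2/2}$, which is at most $1/16$ for suitable constants such as $(c_1,c_2)=(100,75)$.

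Case (ii) is the delicate part and where I expect the main obstacle: the failure probability must scale \emph{linearly} in $\hindex{A}$, not merely be a constant. A plain Markov bound gives only $\Pr\paren{X\geq c_2} \leq c_1 \hindex{A}/(c_2 T)$, which cannot be pushed below the target $\hindex{A}/(4T)$ under the constraints imposed by case (i); a generic Chernoff bound likewise yields only a constant $1/16$-type bound, too weak once $\hindex{A}$ is small. The right tool is the Poisson-type binomial upper tail
\[
    \Pr\paren{X \geq c_2} \leq \paren{\frac{e\,\E[X]}{c_2}}^{c_2} \leq \paren{\frac{e c_1 \hindex{A}}{c_2\, T}}^{c_2}.
\]
Writing $z := \hindex{A}/T \in (0,1/4)$ and $\gamma := ec_1/c_2$, the target $(\gamma z)^{c_2} \leq z/4$ rearranges to $\gamma^{c_2} z^{c_2-1} \leq 1/4$. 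Since the left side is monotone increasing in $z$ for $c_2 \geq 2$, it suffices to verify at $z = 1/4$, which reduces to $\gamma \leq 4^{1-2/c_2}$; this is satisfied by $(c_1,c_2)=(100,75)$ since $\gamma = 4e/3 \approx 3.63 < 4^{73/75} \approx 3.79$. Therefore $\Pr\paren{\textnormal{Large}} \leq \hindex{A}/(4T)$ for all $\hindex{A} < T/4$, handling case (ii) and completing the proof.
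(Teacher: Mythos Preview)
Your proof is correct. It differs from the paper's in the concentration tool: the paper analyzes the same sampling algorithm (with $s=64\,n/T$ and threshold $sT/(2n)=32$) but handles \emph{both} cases via Chebyshev's inequality. For case~(ii) in particular, the paper notes that $\var{X} \leq \expect{X} \leq s\cdot\hindex{A}/n$, so Chebyshev with deviation $t=sT/(4n)=16$ immediately yields $\Pr(X \geq 32) \leq (s\,\hindex{A}/n)/(sT/4n)^2 = \hindex{A}/(4T)$; the linear dependence on $\hindex{A}$ falls out with no constant-tuning at all. Your route through the Poisson-type tail $(e\mu/c_2)^{c_2}$ is more work, since you must verify that the degree-$c_2$ polynomial $(\gamma z)^{c_2}$ stays below the linear target $z/4$ on $(0,1/4)$, but it rewards you with a far sharper bound when $\hindex{A}/T$ is small (a strength the paper's subsequent analysis does not actually exploit). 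Either argument is fine; the paper's is shorter, yours gives tighter tails. (A minor numerical note: $4^{73/75}\approx 3.85$, not $3.79$, but the needed inequality $4e/3<4^{73/75}$ still holds comfortably.)
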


Let us point out the asymmetric guarantee of the algorithm: it does not underestimate $\hindex{A}$ with a certain constant probability while it does not overestimate $\hindex{A}$ with probability proportional 
to the ``rate'' of overestimation. This guarantee will be crucial in our final algorithm. We also note that the guarantee on the runtime of the algorithm is deterministic.

\subsubsection{The Algorithm}

At a high level, our algorithm, \texttt{\hyperref[sec:weak]{h-index-weak-estimator}}, queries random indices from $A$ and calculates the proportion of those indices that are above a threshold representing the mid-point between a $h$-index of $T/4$ and $T$. If the proportion is below the threshold, the algorithm outputs \emph{Small}; otherwise, it outputs \emph{Large}.  The algorithm is formally as follows. 

\begin{tbox}
		\texttt{\hyperref[sec:weak]{h-index-weak-estimator}($A[1:n]$, $T$)}. 
		\begin{enumerate}
            \item Sample $k := {64\cdot n}/{T}$ indices $S$ independently and uniformly with repetition from $[n]$. 
            \item Let $X$ denote the number of indices $i \in S$ such that $A[i] \geq T$. 
            \item If $X \geq {k \cdot T}/{(2n)}$, output \emph{Large} and otherwise \emph{Small}. 
		\end{enumerate}
\end{tbox}

The runtime of \texttt{\hyperref[sec:weak]{h-index-weak-estimator}} is simply $O(n/T)$ as we are sampling these many indices in $S$ and then for each $i \in S$, we need to query $A[i]$; counting the value of $X$ and outputting 
the answer can also be done in $O(n/T)$ time, which bounds the runtime as desired. 
\subsubsection{The Analysis}

We now analyze the correctness of the algorithm. For any $j \in [k]$, define an indicator random variable $X_j$ which is $1$ iff the $j$-th sample in $S$, namely, $i_j \in [n]$, satisfies $A[i_j] \geq T$. This way, 
for the counter $X$ in the algorithm, we have $X = \sum_{j=1}^{k} X_j$. Recall that the output of the algorithm depends on the value of $X$. In the following, we will separately consider
the value of $X$ in the case when the output is supposed to be \emph{Large} versus when it is supposed to be \emph{Small}.

\paragraph*{Case I: the ``Large'' case.} We first consider the case when the output should be \emph{Large}, or when $\hindex{A} \geq T$. In this case, 
\begin{align}
	\expect{X} = \sum_{j=1}^{k} \expect{X_j} = \sum_{j=1}^{k} \Pr_{i_j \in_R [n]}\paren{A[i_j] \geq T} \geq k \cdot \frac{T}{n}, \label{eq:case1-expect}
\end{align}
since $A$ consists of at least $T$ indices with value $\geq T$ when $\hindex{A} \geq T$, and we are sampling indices $i_j \in [n]$ for $j \in [k]$ uniformly at random. 
We can similarly bound the variance of $X$ using~\Cref{fact:variance} since variables $X_{j}$ for $j \in [k]$ are independent, and thus, 
\begin{align}
	\var{X} = \var{\sum_{j=1}^{k} {X_j}} = \sum_{j=1}^{k} \var{X_j} \leq \sum_{j=1}^{k} \expect{X_j^2} =  \sum_{j=1}^{k} \expect{X_j} = \expect{X} \label{eq:case1-var},
\end{align}
where the second to last equality is because for all $j \in [k]$, $X_j$ is an indicator random variable. 

We use Chebyshev's inequality (\Cref{prop:conc}) to finalize the proof of this case. 

\begin{claim} \label{clm:errorLargeE}
When $\hindex A \geq T$, we have $\Pr\paren{\textnormal{algorithm outputs \emph{Small}}} \leq 1/16$.
\end{claim}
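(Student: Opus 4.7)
The plan is a direct application of Chebyshev's inequality using the expectation lower bound in \Cref{eq:case1-expect} and the variance upper bound in \Cref{eq:case1-var}. Let $\mu := \expect{X}$, so by \Cref{eq:case1-expect} we have $\mu \geq kT/n$, and by our choice $k = 64n/T$ this gives $\mu \geq 64$. The algorithm outputs \emph{Small} precisely when $X < kT/(2n)$, and since $kT/(2n) \leq \mu/2$ by \Cref{eq:case1-expect}, the event that the algorithm outputs \emph{Small} is contained in the event $\card{X - \mu} > \mu/2$.

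Next, I would apply Chebyshev's inequality from \Cref{prop:conc} with $t = \mu/2$:
\[
\Pr\paren{\card{X - \mu} \geq \mu/2} \leq \frac{\var{X}}{(\mu/2)^2} = \frac{4\var{X}}{\mu^2}.
\]
Using the variance bound $\var{X} \leq \mu$ from \Cref{eq:case1-var}, this simplifies to $4/\mu$. Plugging in $\mu \geq 64$ yields a bound of $4/64 = 1/16$, which finishes the claim.

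There is no serious obstacle here since the hard work — bounding $\expect{X}$ from below and $\var{X}$ from above — has already been done in \Cref{eq:case1-expect} and \Cref{eq:case1-var}. The only point to be slightly careful about is making sure the strict inequality defining the \emph{Small} output translates cleanly to an event of the form $\card{X - \mu} \geq \mu/2$ (or at least a subset of it), which follows because $kT/(2n)$ is at most half of the lower bound on $\mu$. The choice $k = 64n/T$ was precisely calibrated so that the resulting Chebyshev tail is $1/16$.
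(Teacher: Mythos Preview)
Your proposal is correct and essentially identical to the paper's proof: both apply Chebyshev's inequality with $t = \expect{X}/2$, use the variance bound $\var{X} \leq \expect{X}$ from \Cref{eq:case1-var}, and conclude via $\expect{X} \geq kT/n = 64$ from \Cref{eq:case1-expect}.
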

\begin{proof}
Recall that the algorithm outputs \emph{Small} iff $X < k \cdot T/(2n)$. By Chebyshev's inequality (\Cref{prop:conc}) with $t= \expect{X}/2$,~\Cref{eq:case1-expect}, and~\Cref{eq:case1-var}, we have, 
\[
	\Pr\paren{X <  \frac{k \cdot T}{2n}} \leq \Pr\paren{\card{X-\expect{X}} > \frac{\expect{X}}{2}} \leq \frac{4 \cdot \var{X}}{\expect{X}^2} \leq \frac{4}{\expect{X}} \leq \frac{1}{16},
\]
where the last inequality is by~\Cref{eq:case1-expect} and the value of $k = 64 \cdot n/T$ in the algorithm. 
\end{proof}
This claim is now enough to establish property $(i)$ in~\Cref{lem:weakEstimator}. 

\paragraph*{Case II: the ``Small'' case.} We now consider the case when the output should be \emph{Small}, namely, when $\hindex{A} < T/4$. In this case, 
we have, 
\begin{align}
	\expect{X} = \sum_{j=1}^{k} \expect{X_j} = \sum_{j=1}^{k} \Pr_{i_j \in_R [n]}\paren{A[i_j] \geq T} < k \cdot \frac{T}{4n}, \label{eq:case2-expect}
\end{align}
as there are less than $T/4$ indices in $A$ with value $\geq T$ when $\hindex{A} < T/4$, and we are sampling indices $i_j \in [n]$ for $j \in [k]$ uniformly at random. 
We will also bound the variance of $X$ similarly to~\Cref{eq:case1-var} but in a slightly more careful manner. By~\Cref{fact:variance}, since variables $X_{j}$ for $j \in [k]$ are independent, we have, 
\begin{align}
	\var{X} = \sum_{j=1}^{k} \var{X_j} \leq \sum_{j=1}^{k} \expect{X_j} = \sum_{j=1}^{k} \Pr_{i_j \in_R [n]}\paren{A[i_j] \geq T} \leq k \cdot \frac{\hindex{A}}{n} \label{eq:case2-var},
\end{align}
where in the last inequality, we use the fact that the number of indices in $A$ with value larger than $T$ is at most $\hindex{A}$ (since we already know that $\hindex{A} < T$). 

To conclude the proof, we again use Chebyshev's inequality but with a slightly different analysis. 

\begin{claim}  \label{clm:errorSmallE}
When $\hindex A < T/4$, we have $\Pr\paren{\textnormal{algorithm outputs \emph{Large}}} \leq \hindex{A}/(4T)$.
\end{claim}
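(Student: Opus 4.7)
The plan is to mirror the structure of the proof of Claim 4.1 but use the sharper variance bound from Eq~(case2-var), since the statement now requires a bound proportional to $\hindex{A}/T$ rather than a fixed constant.

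First I would observe that the algorithm outputs \emph{Large} precisely when $X \geq kT/(2n)$. Combined with $\expect{X} < kT/(4n)$ from Eq~(case2-expect), any such outcome must satisfy
\[
X - \expect{X} > \frac{kT}{2n} - \frac{kT}{4n} = \frac{kT}{4n},
\]
so the event is contained in $\{\,|X-\expect{X}| \geq kT/(4n)\,\}$.

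Next I would apply Chebyshev's inequality (\Cref{prop:conc}) with $t = kT/(4n)$ and plug in the variance bound $\var{X} \leq k \cdot \hindex{A}/n$ from Eq~(case2-var):
\[
\Pr\paren{X \geq \frac{kT}{2n}} \leq \frac{\var{X}}{(kT/(4n))^2} \leq \frac{k \cdot \hindex{A}/n}{k^2 T^2 / (16 n^2)} = \frac{16 n \cdot \hindex{A}}{k T^2}.
\]
Finally, substituting $k = 64 n / T$ from the algorithm collapses the right-hand side to exactly $\hindex{A}/(4T)$, which is the claimed bound.

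There is no real obstacle here; the only point requiring a bit of care is that one must use the \emph{refined} variance estimate from Eq~(case2-var) (which exploits the fact that fewer than $\hindex{A}$ entries of $A$ can be at least $T$) rather than the cruder bound $\var{X} \leq \expect{X}$ used in Case~I. Using the cruder bound would only give a constant failure probability, whereas the asymmetric, $\hindex{A}$-dependent guarantee is what the final algorithm will need when chaining the weak estimator across multiple thresholds.
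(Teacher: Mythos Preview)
Your proposal is correct and follows essentially the same argument as the paper: apply Chebyshev with $t = kT/(4n)$, use the refined variance bound $\var{X} \leq k\cdot\hindex{A}/n$ from \Cref{eq:case2-var}, and substitute $k = 64n/T$ to obtain $\hindex{A}/(4T)$. Your remark about needing the sharper variance estimate (rather than $\var{X} \leq \expect{X}$) is exactly the point.
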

\begin{proof}
	The algorithm will output \emph{Large} iff $X \geq k\cdot T/(2n)$. By Chebyshev's inequality (\Cref{prop:conc}) with $t= k \cdot T/(4n)$, and~\Cref{eq:case2-expect},~\Cref{eq:case2-var}, we have, 
\[
	\Pr\paren{X \geq  \frac{k \cdot T}{2n}} \leq \Pr\paren{\card{X - \expect{X}} > \frac{k \cdot T}{4n}} \leq \frac{\var{X}}{(k \cdot T/4n)^2} \leq \frac{(k \cdot \hindex{A}/n)}{(k \cdot T/4n)^2} = \frac{\hindex{A}}{4T}, 
\]
where the last equality is by the choice of $k = 64 \cdot n/T$ in the algorithm. 
\end{proof}

\Cref{lem:weakEstimator} now follows from the previous two claims.

\subsection{A Strong Estimator}
We now present our second intermediate algorithm which outputs an estimate of $\hindex A$ when given the guarantee that $\hindex A$ is at least as large as a given threshold. 
\begin{lemma} \label{lem:algStrong}
There exists a sublinear time algorithm that, given query access to an integer array $A[1:n]$, an integer $T \leq \hindex{A}$, and approximation parameter $\varepsilon \in (0,1)$, 
in  $O(n/(\varepsilon^2 T))$ time outputs an estimate $\tilde{h}$ of $\hindex A$ such that 
$\Pr(|\tilde{h} - \hindex A| \leq \varepsilon \cdot \hindex A) \geq 2/3$.

The guarantee on the runtime of the algorithm holds deterministically even when $T > \hindex{A}$. 
\end{lemma}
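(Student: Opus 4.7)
The strategy is to sample enough entries for a rescaled ``h-index of the sample'' to concentrate around $\hindex A$. I would sample $k := Cn/(\eps^2 T)$ indices $S$ independently and uniformly with replacement from $[n]$ for a sufficiently large absolute constant $C$, and let $c_h := |\set{i \in S : A[i] \geq h}|$, which is non-increasing in $h$. The estimator is
\[ \tilde h := \max\set{h \in \IN : c_h \geq hk/n}. \]
To compute $\tilde h$ in $O(k)$ time, cap each queried value at $n$ (which cannot change $\hindex A$), bucket-sort the $k$ samples into $k$ buckets of width $n/k$, and scan from large to small values; one can check that $\tilde h = \max_{j \in [k]} \min(Y_j, \lfloor jn/k\rfloor)$ where $Y_1 \geq \cdots \geq Y_k$ are the sorted samples. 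The runtime is $O(k) = O(n/(\eps^2 T))$ deterministically, regardless of whether $T \leq \hindex A$.

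For correctness, write $h^* := \hindex A$ and $q_h := |\set{i \in [n] : A[i] \geq h}|$, so that $\expect{c_h} = k q_h/n$ and $c_h$ is a sum of $k$ i.i.d.\ Bernoullis. The definition of the h-index yields $q_h \geq h^*$ for every $h \leq h^*$ and $q_h \leq h^*$ for every $h > h^*$. Rather than union-bound over all $h$, I would use the monotonicity of $c_h$ to reduce to one Chernoff bound per tail. For the upper tail, if $\tilde h > (1+\eps) h^*$, then with $h_+$ being the smallest integer strictly greater than $(1+\eps)h^*$, monotonicity of $c_h$ gives $c_{h_+} \geq h_+ k/n$, while $\expect{c_{h_+}} \leq kh^*/n$ forces an absolute overshoot of at least $\eps kh^*/n$; a multiplicative Chernoff bound (\Cref{prop:conc}) yields probability at most $\exp(-\Omega(\eps^2 kh^*/n)) = \exp(-\Omega(C))$, using $h^* \geq T$ and $k = Cn/(\eps^2 T)$. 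The lower tail is symmetric: $\tilde h < (1-\eps)h^*$ implies $c_{h_-} < h_- k/n$ for $h_- := \lceil (1-\eps) h^*\rceil$, whereas $\expect{c_{h_-}} \geq kh^*/n$, again a relative $\eps$-deviation. A union bound over the two tails and a sufficiently large choice of $C$ gives success probability at least $2/3$.

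The main obstacle I anticipate is precisely the monotonicity-based reduction: a naive analysis would union-bound the ``for all $h$'' condition in the definition of $\tilde h$ and pay an extra $\Theta(\log n)$ factor in the sample size, whereas the reduction to the single extremal integer $h_+$ or $h_-$ keeps the bound asymptotically tight in $\eps$. One remaining edge case is when $\eps h^* \lesssim 1$, where a $(1\pm\eps)$-approximation of an integer forces $\tilde h = h^*$ exactly; but in that regime $k = \Omega(n/\eps) \geq n$, and the algorithm can simply read the entire array and return $\hindex A$ in $O(n) = O(k)$ time.
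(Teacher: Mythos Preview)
Your proposal is essentially the paper's approach: the estimator (the rescaled h-index of a uniform sample of size $k=\Theta(n/(\eps^2 T))$) and the correctness argument (use the monotonicity of $c_h$ to reduce to a single threshold on each side and apply a concentration bound---Chernoff for you, Chebyshev in the paper) are the same. The only substantive difference is in the $O(k)$-time computation of $\tilde h$: bucket-sorting $k$ samples into $k$ buckets of width $n/k$ does not actually produce the fully sorted sequence $Y_1\geq\cdots\geq Y_k$ that your formula $\max_j\min(Y_j,\lfloor jn/k\rfloor)$ requires, so that step is not complete as written (it can be patched, e.g., by a linear-time selection inside the single ``crossing'' bucket, or by absorbing the $n/k=O(\eps^2 T)$ bucket-width slack into the final error). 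The paper instead computes $\tilde h$ exactly in $O(k)$ time via a recursive median-of-medians routine on the sampled array (a ``discounted h-index'' subproblem), which sidesteps sorting altogether.
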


We emphasize that while the guarantee on the runtime of the algorithm in~\Cref{lem:algStrong} holds even when $T > \hindex{A}$, we clearly have no guarantee on the 
correctness in this case. 

\subsubsection{The Algorithm}
The algorithm, $\texttt{\hyperref[sec:strong]{h-index-strong-estimator}}$, queries a set of random indices from $A$ and finds a scaled estimate of the $ h$-index. Formally, 

\begin{tbox}
		\texttt{\hyperref[sec:strong]{h-index-strong-estimator}($A[1:n]$, $T$, $\varepsilon$)}.
		\begin{enumerate}
			\item Sample $k := {6n}/{(\varepsilon^2 T)}$ indices $S$ independently and uniformly with repetition from $[n]$. 
			\item Let $B[1:k]$ be an array consisting of integers $A[i]$ for $i \in S$. 
			\item Return\footnote{We will implement this step efficiently in~\Cref{lem:strongRuntime}.} the largest integer $q \in [n]$ such that $q \cdot k/n$ indices in $B$ are at least $q$.  
		\end{enumerate}
\end{tbox}

The first two lines of $\texttt{\hyperref[sec:strong]{h-index-strong-estimator}}$ can be implemented in $O(k) = O(n/(\eps^2 T))$ time in a straightforward way. We show that the last step can also be implemented in
$O(k)$ time. %The proof appears in~\Cref{app:strongRuntime}. 

\begin{lemma}\label{lem:strongRuntime}
\textnormal{\texttt{\hyperref[sec:strong]{h-index-strong-estimator}}} runs in $O(n/(\varepsilon^2 T))$ time. 
\end{lemma}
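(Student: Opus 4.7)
The only issue is step~3: steps~1 and~2 obviously run in $O(k) = O(n/(\varepsilon^2 T))$ time, given $O(1)$-time query access to $A$. In step~3 we must find the largest integer $q \in [n]$ with $|\{i: B[i] \geq q\}| \geq qk/n$, and a naive scan over $q \in [n]$ would cost $\Omega(n)$, which can be far larger than $k$ (e.g.\ when $\varepsilon^2 T \gg 1$). The plan is to reduce step~3 to a single $h$-index computation on a length-$k$ array whose entries are bounded by $k$, to which the linear-time procedure from the introduction (cap, counting sort, single scan) applies directly, plus one call to a linear-time selection routine.

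Concretely, I would first cap each $B[i]$ at $n$---this is safe because step~3 only considers $q \leq n$---and then define the rescaled array $B'[i] := \lfloor B[i] \cdot k/n \rfloor$, which lies in $\{0,1,\ldots,k\}$. Since the map $B[i] \mapsto B'[i]$ is monotone nondecreasing, the sorted orders of $B$ and $B'$ coincide, and a direct calculation shows that for every integer $j$, the $j$-th largest of $B$ is $\geq jn/k$ if and only if the $j$-th largest of $B'$ is $\geq j$. Consequently $j^\star := \hindex{B'}$, computable in $O(k)$ time by the introduction's linear-time $h$-index algorithm applied to $B'$, exactly captures the crossover between the decreasing sequence $j \mapsto \beta_j$ (sorted entries of $B$ in decreasing order) and the increasing sequence $j \mapsto jn/k$. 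A short case analysis on the two sides of this crossover then shows that the output of step~3 equals $q^\star = \max\bigl(\lfloor j^\star n/k \rfloor,\; \beta_{j^\star+1}\bigr)$.

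The remaining ingredient is finding $\beta_{j^\star+1}$, the $(j^\star+1)$-th largest entry of $B$, in $O(k)$ time; for this I would invoke a standard linear-time selection algorithm such as median-of-medians. Combining the pieces, step~3 runs in $O(k) = O(n/(\varepsilon^2 T))$ time, and together with steps~1 and~2 this establishes the claimed runtime. The only mild obstacle is the case analysis identifying $q^\star$ with $\max(\lfloor j^\star n/k \rfloor, \beta_{j^\star+1})$: one needs to check carefully that no integer $q$ strictly between these two candidates can satisfy the step~3 condition, which follows from the monotonicity of $|\{i: B[i] \geq q\}|$ in $q$ together with the integrality of $q$ and the definition of $j^\star$. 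Once this is pinned down the asymptotic runtime is immediate.
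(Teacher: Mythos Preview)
Your approach is correct but differs from the paper's. The paper does not rescale; instead it introduces a recursive subroutine \texttt{discounted-h-index}$(C,d,\alpha)$ that, for $\alpha=k/n$, directly finds the largest $p$ in $C$ with at least $\alpha p - d$ entries $\geq p$. At each level it runs median-of-medians to find the median $p$ of the current subarray, checks whether $t/2 \geq \alpha p - d$, and recurses on the appropriate half (adjusting $d$ when recursing right). The recurrence $T(t)=O(t)+T(t/2)$ gives $O(k)$ overall.

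Your route---cap at $n$, rescale to $B'[i]=\lfloor B[i]\,k/n\rfloor$, compute $j^\star=\hindex{B'}$ by counting sort, then recover $q^\star=\max(\lfloor j^\star n/k\rfloor,\beta_{j^\star+1})$ with one linear-time selection---is equally valid and arguably more elementary, since it reduces entirely to the two off-the-shelf primitives already mentioned in the paper's introduction. The trade-off is that you must carry out the crossover case analysis identifying $q^\star$ (which you flag, and which does go through: any $q>\max(\lfloor j^\star n/k\rfloor,\beta_{j^\star+1})$ satisfies $qk/n>j^\star$ and $c(q)\leq j^\star$, hence fails), whereas the paper's recursion sidesteps this by maintaining the invariant through the offset parameter $d$. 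The paper's version also solves the slightly more general ``discounted'' $h$-index problem for arbitrary $\alpha$ and $d$, which may be of independent use; your reduction is tailored to the specific instance arising from step~3.
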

\begin{proof}
As argued earlier, we only need to focus on implementing the last step of the algorithm in $O(k)$ time. The problem we want to solve can be seen as a ``discounted'' version of the original $h$-index problem: for $\alpha = k/n$, our goal is to find the largest 
number $q$ in $B$ such that there are at least $\alpha \cdot q$ numbers as large as $q$ in $B$. In the case when $\alpha=1$, this is exactly the $h$-index problem itself on $B$, but for smaller values of $\alpha$, this can change. We show how 
to solve this discounted $h$-index problem in linear time. For the purpose of this problem, we assume all the values in  $B$ are distinct which can be achieved by  using a consistent tie-breaking rule. 

We design an algorithm \texttt{\hyperref[alg:discount]{discounted-h-index}}$(C[1:t],d,\alpha)$ that given any array $C[1:t]$, an integer $d \geq 0$, and parameter $\alpha \in [0,1]$, outputs the largest number $p$ from $C$ such that there are at least $\alpha \cdot p - d$ numbers
with value $\geq p$ in $C$ under the promise that at least one such number always exists in the array. We can then run \texttt{\hyperref[alg:discount]{discounted-h-index}}$(B[1:k],0,k/n)$ to obtain the desired integer $q$ in the third line of $\texttt{\hyperref[sec:strong]{h-index-strong-estimator}}$. 
%We now design $\texttt{discounted-h-index}(C[1:t],d,\alpha)$. 

\begin{tbox}
	$\texttt{discounted-h-index}(C[1:t],d,\alpha)$. 
	\begin{enumerate}[label=$(\roman*)$]
		\item If $t=1$ return $C[1]$; otherwise, continue with the following lines. 
		\item Run the Median-of-Medians algorithm of \cite{medianOfMedians} to find the median $p$ of $C$ and partition $C$ using $p$ so that $p$ appears in position $t/2$. All smaller elements of $C$
		appear in $C[1:t/2)$ and all larger elements appear in $C(t/2:t]$. 
		\item If $t/2 \geq \alpha \cdot p - d$, return $\texttt{discounted-h-index}(C[t/2:t],d+t/2,\alpha)$. Otherwise, return $\texttt{discounted-h-index}(C[1:t/2),d,\alpha)$.
	\end{enumerate}
\end{tbox}

The runtime of this algorithm is $O(t)$ because we are spending $O(t)$ time to run the Median-of-Medians algorithm and another $O(t)$ time to partition $C$ and recurse on an array of size $t/2$. 

We now prove the correctness of this algorithm. The base case of the algorithm for $t=1$ is correct by the promise on the existence of the number in the problem statement. 
Now consider a larger value of $t$ and the median $p$ of $C$. Since $p$ is the median, we know that
there are exactly $t/2$ elements with value $\geq p$ in $C$. 
\begin{itemize}
\item If $t/2 \geq \alpha \cdot p - d$, we get that the ``right'' answer can only be $\geq p$ by definition and thus we only need to search for it in $C[t/2:t]$ -- moreover, 
for any number here, we know it is larger than $t/2$ elements in the original array $C$, and thus, we only need our number in $C[t/2:t]$ to be larger than $\alpha \cdot p - d - t/2$ numbers to satisfy the original requirement over $C$ itself. 
This is returned correctly by induction on the smaller subproblem $\texttt{\hyperref[alg:discount]{discounted-h-index}}(C[t/2:t],d+t/2,\alpha)$ as we maintain the promise of the existence of the ``right'' answer in this subarray. 
\item On the other hand, if $t/2 < \alpha \cdot p - d$, we know that the ``right'' answer is $<p$ by definition and thus we should only search for it in $C[1:t/2)$. In this case, by induction, 
$\texttt{\hyperref[alg:discount]{discounted-h-index}}(C[1:t/2),d,\alpha)$ returns the correct answer as we again maintain the promise of the existence of the ``right'' answer in this subarray. 
\end{itemize}
This concludes the proof of the correctness of $\texttt{\hyperref[alg:discount]{discounted-h-index}}$. Thus, the last step of our main algorithm also runs in $O(k) = O(n/(\eps^2 T))$ time, concluding the 
proof of~\Cref{lem:strongRuntime}. 
\end{proof}

\subsubsection{The Analysis}

We now prove the correctness of $\texttt{\hyperref[sec:strong]{h-index-strong-estimator}}$. We consider each case in which the algorithm may overestimate or underestimate $\hindex{A}$ separately. 

\paragraph*{Probability of overestimation.} We first bound the probability that $\tilde{h} > (1+\eps) \cdot \hindex{A}$. For this event to happen, we need $B$
to have more than $(k/n) \cdot (1+\eps) \cdot \hindex{A}$ indices with a value greater than $(1+\eps) \cdot \hindex{A}$. We bound the probability of this happening in the following. 

For any $j \in [k]$, define an indicator random variable $X_j$ which is $1$ iff the $j$-th sample $i_j \in S$ satisfies $A[i_j] > (1+\eps) \cdot \hindex{A}$. Define $X := \sum_{j=1}^{k} X_j$. By the above discussion, 
\begin{align}
	\Pr\paren{\tilde{h} > (1+\eps) \cdot \hindex{A}} = \Pr(X > (k/n) \cdot (1+\eps) \cdot \hindex{A}). \label{eq:strong1} 
\end{align}
We bound the probability of the RHS of this equation. 

\begin{claim} \label{clm:overestimation}
	$ \Pr\paren{X > ({k}/{n}) \cdot (1+\eps) \cdot \hindex{A}} < 1/6$. 
\end{claim}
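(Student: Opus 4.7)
The plan is to upper bound $\expect{X}$ by exploiting the defining property of the h-index, and then apply a concentration inequality. First I would observe that the number of indices $i \in [n]$ with $A[i] > (1+\eps) \hindex{A}$ is at most $\hindex{A}$. Indeed, any such index satisfies $A[i] \geq \hindex{A}+1$ (since the entries are integers and $(1+\eps)\hindex{A} \geq \hindex{A}$), so if there were $\hindex{A}+1$ of them, there would be $\hindex{A}+1$ indices with value $\geq \hindex{A}+1$, contradicting the maximality of $\hindex{A}$ in its definition. Consequently, for each $j \in [k]$ we have $\expect{X_j} = \Pr_{i_j \in_R [n]}(A[i_j] > (1+\eps)\hindex{A}) \leq \hindex{A}/n$, and by linearity $\expect{X} \leq k \cdot \hindex{A}/n$.

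Set $\mu_0 := k \cdot \hindex{A}/n$, which is the quantity appearing on the right-hand side of the event in the claim. Since $\mu_0 \geq \expect{X}$, I would invoke the Chernoff bound of~\Cref{prop:conc} to conclude
\[
\Pr\paren{X > (1+\eps) \cdot \mu_0} \leq \exp\paren{-\frac{\eps^2 \cdot \mu_0}{3}}.
\]
Substituting $k = 6n/(\eps^2 T)$ gives $\mu_0 = 6 \cdot \hindex{A}/(\eps^2 T)$, and using the hypothesis $T \leq \hindex{A}$ we get $\mu_0 \geq 6/\eps^2$. Hence $\eps^2 \mu_0/3 \geq 2$, so the probability is at most $\exp(-2) < 1/6$. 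Combined with~\eqref{eq:strong1}, this yields the claim.

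The only delicate choice is using Chernoff rather than Chebyshev: a Chebyshev-style argument (with $\var{X} \leq \expect{X} \leq \mu_0$) yields only $\Pr(X > (1+\eps)\mu_0) \leq 1/(\eps^2 \mu_0) \leq 1/6$, which fails to give the \emph{strict} inequality required by the claim. The exponential tail bound from Chernoff is what provides the slack (one gets $\exp(-2) \approx 0.135$ rather than $\approx 0.167$), and this is the only real subtlety in the argument.
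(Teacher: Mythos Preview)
Your argument is correct. The paper proves this claim differently: it uses Chebyshev's inequality rather than Chernoff. Specifically, the paper bounds $\var{X} \leq \expect{X} \leq k\,\hindex{A}/n$ via independence of the $X_j$'s and then applies Chebyshev with $t = \eps k\,\hindex{A}/n$ to get
\[
\Pr\!\paren{X > \tfrac{k}{n}(1+\eps)\hindex{A}} \leq \Pr\!\paren{|X-\expect{X}| > \eps k\,\tfrac{\hindex{A}}{n}} \leq \frac{\var{X}}{(\eps k\,\hindex{A}/n)^2} \leq \frac{n}{\eps^2 k\,\hindex{A}} \leq \frac16.
\]
Your observation about the strict inequality is well taken: the Chebyshev route naturally yields only $\leq 1/6$, and the paper in fact slips a strict ``$<$'' into the Chebyshev step without justification. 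Your Chernoff-based argument delivers the strict bound cleanly (with $e^{-2} \approx 0.135$), and it also avoids having to compute the variance separately. The cost is that you rely on the form of the Chernoff bound stated in~\Cref{prop:conc} that allows any $\mu \geq \expect{X}$, which is exactly what the paper provides, so this is fine. Both approaches are equally short; yours is arguably tidier on the strict-versus-nonstrict point, while the paper's is consistent with its use of Chebyshev elsewhere in the analysis (e.g., the ``weak estimator'' and the underestimation claim).
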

\begin{proof}
	Consider the expected value of $X$: 
	\[
		\expect{X} = \sum_{j=1}^{k} \expect{X_j} = \sum_{j=1}^{k} \Pr_{i_j \in_R [n]}\paren{A[i_j] > (1+\eps) \cdot \hindex{A}} \leq k \cdot \frac{\hindex{A}}{n},
	\]
	where the last inequality is because the number of indices with value greater than $\hindex{A}$ in $A$ is at most $\hindex{A}$. For the variance of $X$, using the independence of variables $\set{X_j}_{j \in [k]}$ 
	and~\Cref{fact:variance}, %we have, 
	\[
		\var{X} = \var{\sum_{j=1}^{k} X_j} = \sum_{j=1}^{k} \var{X_j} \leq \sum_{j=1}^{k} \expect{X_j} = \expect{X}, 
	\]
	where the inequality holds since $X_j$ is an indicator random variable, so, $\var{X_j} \leq \Exp[{X_j^2}] = \expect{X_j}$. 
	
	By Chebyshev's inequality (\Cref{prop:conc}) with parameter $t=\eps \cdot k \cdot {\hindex{A}}/{n}$, we have, 
	\[
		 \Pr\paren{X > \frac{k}{n} \cdot (1+\eps) \cdot \hindex{A}} \leq \Pr\paren{\card{X-\expect{X}} > \eps k \cdot \frac{\hindex{A}}{n}} < \frac{\var{X}}{(\eps \cdot k \cdot {\hindex{A}}/{n})^2} \leq \frac{n}{\eps^2 \cdot k \cdot {\hindex{A}}} \leq \frac{1}{6},
	\]
	by the choice of $k = 6n/(\eps^2 T) \geq 6n/(\eps^2\, \hindex{A})$ since $T \leq \hindex{A}$. 
\end{proof}

\paragraph*{Probability of underestimation.} We now bound the probability that $\tilde{h} < (1-\eps) \cdot \hindex{A}$. This case is essentially symmetric to the other one and is provided for completeness. 
For this event to happen, we need $B$
to have less than $(k/n) \cdot (1-\eps) \cdot \hindex{A}$ indices with a value of at least $(1-\eps) \cdot \hindex{A}$. We bound the probability of this happening in the following. 

For any $j \in [k]$, define an indicator random variable $Y_j$ which is $1$ iff the $j$-th sample $i_j \in S$ satisfies $A[i_j] \geq (1-\eps) \cdot \hindex{A}$. Define $Y = \sum_{j=1}^{k} Y_j$. By the above discussion, 
\begin{align}
	\Pr\paren{\tilde{h} < (1-\eps) \cdot \hindex{A}} = \Pr\paren{Y < (k/n) \cdot (1-\eps) \cdot \hindex{A}}. \label{eq:strong2} 
\end{align}
We bound the probability of the RHS of this equation. 

\begin{claim} \label{clm:underestimation}
	$ \Pr\paren{Y < ({k}/{n}) \cdot (1-\eps) \cdot \hindex{A}} < 1/6$. 
\end{claim}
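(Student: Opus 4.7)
The plan is to mirror the Chebyshev argument used in \Cref{clm:overestimation}, up to one small twist needed to handle an asymmetry in the two sides. In \Cref{clm:overestimation} the bound $\expect{X}\leq k\hindex{A}/n$ was obtained from the fact that at most $\hindex{A}$ indices of $A$ can have value strictly greater than $\hindex{A}$; this simultaneously bounded $\var{X}\leq\expect{X}\leq k\hindex{A}/n$, which is what drove the Chebyshev step. Here the symmetric inequality $\expect{Y}\geq k\hindex{A}/n$ does hold (since at least $\hindex{A}$ indices of $A$ have value $\geq\hindex{A}\geq(1-\eps)\hindex{A}$), but $\expect{Y}$ can in principle be as large as $k$, so the plain variance bound $\var{Y}\leq\expect{Y}$ is too weak to plug directly into Chebyshev. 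I expect this mismatch to be the main obstacle.

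To bypass it, I would reduce to a \emph{pruned} version of $Y$ whose expectation is exactly $k\hindex{A}/n$. Using the definition of h-index, fix any set $I\subseteq[n]$ with $|I|=\hindex{A}$ and $A[i]\geq\hindex{A}$ for every $i\in I$, and define $Y_j':=\mathbf{1}[i_j\in I]$ and $Y':=\sum_{j=1}^{k}Y_j'$. Since $i_j\in I$ forces $A[i_j]\geq\hindex{A}\geq(1-\eps)\hindex{A}$, we have $Y_j'\leq Y_j$ pointwise and hence $Y'\leq Y$, giving
\[ \Pr\paren{Y<\tfrac{k}{n}(1-\eps)\hindex{A}}\leq \Pr\paren{Y'<\tfrac{k}{n}(1-\eps)\hindex{A}}, \]
so it suffices to bound the right-hand probability.

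For $Y'$ the calculation now parallels \Cref{clm:overestimation} almost verbatim: $\expect{Y'}=k|I|/n=k\hindex{A}/n$ exactly; by the independence of the $Y_j'$'s together with $\var{Y_j'}\leq\expect{Y_j'}$ (\Cref{fact:variance}), we get $\var{Y'}\leq\expect{Y'}=k\hindex{A}/n$; and Chebyshev's inequality (\Cref{prop:conc}) with $t=\eps\cdot k\hindex{A}/n$ yields
\[ \Pr\paren{Y'<\tfrac{k}{n}(1-\eps)\hindex{A}}\leq \Pr\paren{|Y'-\expect{Y'}|>t}\leq \frac{\var{Y'}}{t^2}\leq \frac{n}{\eps^2\,k\,\hindex{A}}\leq \frac{1}{6}, \]
using $k=6n/(\eps^2 T)\geq 6n/(\eps^2\hindex{A})$ from the hypothesis $T\leq\hindex{A}$. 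As an alternative to the subset trick, one could instead apply the multiplicative Chernoff bound from \Cref{prop:conc} directly to $Y$: if $\mu=\expect{Y}=km/n$ with $m\geq\hindex{A}$ and $\eps'=1-(1-\eps)\hindex{A}/m\geq\eps$, then a short calculation shows $\eps'^2\mu\geq\eps^2 k\hindex{A}/n\geq 6$, giving a failure probability $\exp(-\eps'^2\mu/3)\leq e^{-2}<1/6$; either route works.
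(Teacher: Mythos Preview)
Your proof is correct, but the obstacle you flagged is illusory, and the paper's argument is in fact the direct Chebyshev you thought would fail. The paper applies Chebyshev with deviation $t=\eps\cdot\expect{Y}$ (not $t=\eps\cdot k\hindex{A}/n$): since $\expect{Y}\geq k\hindex{A}/n$, the event $Y<(k/n)(1-\eps)\hindex{A}$ implies $Y<(1-\eps)\expect{Y}$, and then
\[
\Pr\paren{|Y-\expect{Y}|>\eps\,\expect{Y}}\leq \frac{\var{Y}}{\eps^2\,\expect{Y}^2}\leq \frac{1}{\eps^2\,\expect{Y}}\leq \frac{n}{\eps^2\,k\,\hindex{A}}\leq \frac16.
\]
The point you missed is that $\var{Y}/(\eps\expect{Y})^2\leq 1/(\eps^2\expect{Y})$ is \emph{decreasing} in $\expect{Y}$, so a large expectation only helps; the issue only arises if you fix $t$ independently of $\expect{Y}$ and then bound the numerator by $\expect{Y}$.

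Your subset-pruning trick and the Chernoff alternative are both valid and reach the same bound, so nothing is lost in correctness. But they are more elaborate than needed: the paper's proof is literally symmetric to \Cref{clm:overestimation} once one picks $t$ proportional to $\expect{Y}$ rather than to the fixed threshold.
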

\begin{proof}
	Consider the expectation of $Y$: 
	\[
		\expect{Y} = \sum_{j=1}^{k} \expect{Y_j} = \sum_{j=1}^{k} \Pr_{i_j \in_R [n]}\paren{A[i_j] \geq (1-\eps) \cdot \hindex{A}} \geq k \cdot \frac{\hindex{A}}{n},
	\]
	where the last inequality is because the number of indices that is at least some value $v \leq \hindex{A}$ in $A$ is at least $\hindex{A}$. For the variance of $Y$, using the independence of variables $\set{Y_j}_{j \in [k]}$ 
	and~\Cref{fact:variance},% we have, 
	\[
		\var{Y} = \var{\sum_{j=1}^{k} Y_j} = \sum_{j=1}^{k} \var{Y_j} \leq \sum_{j=1}^{k} \expect{Y_j} = \expect{Y}, 
	\]
	where the inequality holds since $Y_j$ is an indicator random variable, so, $\var{Y_j} \leq \Exp[{Y_j^2}] = \expect{Y_j}$. 
	
	By Chebyshev's inequality (\Cref{prop:conc}) with parameter $t=\eps \cdot \expect{Y}$, we have, 
	\[
		 \Pr\paren{Y < \frac{k}{n} \cdot (1-\eps) \cdot \hindex{A}} \leq \Pr\paren{\card{Y-\expect{Y}} > \eps \cdot \expect{Y}} < \frac{\var{Y}}{(\eps \cdot \expect{Y})^2} \leq \frac{n}{\eps^2 \cdot k \cdot {\hindex{A}}} \leq \frac{1}{6},
	\]
	by the choice of $k = 6n/(\eps^2 T) \geq 6n/(\eps^2\, \hindex{A})$ since $T \leq \hindex{A}$. 
\end{proof}

Combining~\Cref{clm:overestimation} and~\Cref{clm:underestimation} concludes the proof of~\Cref{lem:algStrong}.

\subsection{The Sublinear Time h-Index-Estimator Algorithm}

We now combine our weak and strong estimators to obtain a sublinear time algorithm for estimating the $h$-index and prove~\Cref{thm:alg}. 
The algorithm runs $\texttt{\hyperref[sec:weak]{h-index-weak-estimator}}$ on smaller and smaller thresholds to determine a threshold that tightly lower bounds $\hindex A$. Then, $\texttt{\hyperref[sec:strong]{h-index-strong-estimator}}$ uses that threshold to output an estimate of $\hindex A$. Finally, to ensure a probability of success of at least $1-\delta$, we combine the median/majority trick in a rather non-black-box way using the asymmetric guarantee of 
$\texttt{\hyperref[sec:weak]{h-index-weak-estimator}}$ in part $(ii)$ of~\Cref{lem:weakEstimator}. 

\begin{tbox}
		\texttt{\hyperref[alg:h]{h-index-estimator}($A[1:n]$, $\varepsilon$, $\delta$)}.
\begin{enumerate}
\item Let $r_1 := 7\ln(8/\delta)$ and $r_2 :=  108\ln(8/\delta)$ and initialize $T$ to $n$. 
\item While the \emph{majority} answer of running $\texttt{\hyperref[sec:weak]{h-index-weak-estimator}}$($A$, $T$) $r_1$ times returns \emph{Small}, update $T \leftarrow T/4$. 
\item For the current value of $T$, run $\texttt{\hyperref[sec:strong]{h-index-strong-estimator}}$($A$, $T/16$, $\varepsilon$) $r_2$ times and return the median answer as the final estimate $\tilde{h}$. 
\end{enumerate}
\end{tbox}

We bound the runtime of the algorithm in the following lemma. 

\begin{lemma}
\textnormal{$\texttt{\hyperref[alg:h]{h-index-estimator}}$} runs in $O\Big{(} \dfrac{n \cdot \ln(1/\delta)}{\varepsilon^2 \cdot \hindex A} \Big{)}$ time with probability $1-\delta/2$. 
\end{lemma}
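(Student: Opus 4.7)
The plan is to separately bound the time spent in the while loop and in the final batch of strong-estimator calls, and then argue that with probability at least $1-\delta/2$ the value $T_{\text{final}}$ of $T$ upon exiting the loop satisfies $T_{\text{final}} \geq \hindex{A}/4$. Since the per-call runtime of both \texttt{\hyperref[sec:weak]{h-index-weak-estimator}} and \texttt{\hyperref[sec:strong]{h-index-strong-estimator}} is deterministic (by~\Cref{lem:weakEstimator} and~\Cref{lem:algStrong}), all the probabilistic content of the statement will be folded into this single lower bound on $T_{\text{final}}$.

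For the deterministic bookkeeping, write $T_i := n/4^i$ for the sequence of thresholds tried by the loop, and let $T_{\text{final}} = T_j$ for the $j$ at which the loop exits. Each iteration at $T_i$ makes $r_1$ calls to the weak estimator at cost $O(n/T_i)$ apiece, so the total loop work is
\[
\sum_{i=0}^{j} O\!\Paren{\tfrac{n \cdot r_1}{T_i}} \;=\; O\!\Paren{\tfrac{n \cdot r_1}{T_j}} \;=\; O\!\Paren{\tfrac{n \cdot \ln(1/\delta)}{T_{\text{final}}}},
\]
where the first equality is a geometric sum with ratio $1/4$. The final phase runs \texttt{\hyperref[sec:strong]{h-index-strong-estimator}}$(A, T_{\text{final}}/16, \varepsilon)$ a total of $r_2$ times, and by the deterministic runtime guarantee of~\Cref{lem:algStrong} each such call costs $O(n / (\varepsilon^2 \cdot T_{\text{final}}/16)) = O(n/(\varepsilon^2 T_{\text{final}}))$, for a total of $O(n \ln(1/\delta)/(\varepsilon^2 T_{\text{final}}))$. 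Using $\varepsilon \leq 1$, this dominates the loop work, so the overall runtime is $O(n \ln(1/\delta)/(\varepsilon^2 T_{\text{final}}))$.

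To lower bound $T_{\text{final}}$, let $i^*$ be the smallest non-negative integer with $T_{i^*} \leq \hindex{A}$; by minimality $T_{i^*} \in (\hindex{A}/4, \hindex{A}]$. At iteration $i^*$ each of the $r_1$ independent invocations of the weak estimator returns \emph{Large} with probability at least $15/16$ by~\Cref{lem:weakEstimator}$(i)$. Letting $Y$ count the number of \emph{Small} answers among these $r_1$ trials, we have $\expect{Y} \leq r_1/16$, and applying the Chernoff bound (\Cref{prop:conc}) with deviation $r_1/2 \geq 8\,\expect{Y}$ gives $\Pr(Y \geq r_1/2) \leq \exp(-7 r_1/48) \leq \delta/8$ by the choice $r_1 = 7\ln(8/\delta)$. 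Off of this failure event, the loop has already exited — possibly at some earlier, and thus larger, $T$ — so $T_{\text{final}} \geq T_{i^*} > \hindex{A}/4$, and substituting back yields the claimed runtime.

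The main subtlety is the geometric-sum accounting for the while loop: a naive estimate would lose an extra $\log n$ factor from the up to $\log_4 n$ iterations, whereas here the ratio-$1/4$ telescoping collapses all of that work into a bound that depends only on the final (smallest) threshold. A companion observation is that no union bound over iterations is needed; it suffices to control the single ``critical'' iteration $i^*$, because exiting the loop earlier only makes $T_{\text{final}}$ larger and the runtime smaller. Correctness of the strong estimator invoked on threshold $T_{\text{final}}/16$ is a separate matter handled elsewhere in the proof of~\Cref{thm:alg} and does not intervene in the runtime analysis above.
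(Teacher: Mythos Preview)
Your proof is correct and follows essentially the same approach as the paper's: identify the critical threshold $T^* = T_{i^*}$ with $T^* \leq \hindex{A} < 4T^*$, use a Chernoff bound on the $r_1$ weak-estimator trials at that iteration to show the loop terminates there (or earlier) with probability at least $1-\delta/2$, and then collapse the loop cost via the ratio-$1/4$ geometric sum. The only cosmetic difference is ordering—you do the deterministic geometric-sum bookkeeping first and then bound $T_{\text{final}}$, whereas the paper states the probabilistic claim first and then conditions on it—but the content is the same.
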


\begin{proof}
The runtime depends on both running \texttt{\hyperref[sec:weak]{h-index-weak-estimator}} on (potentially) multiple thresholds and running \texttt{\hyperref[sec:strong]{h-index-strong-estimator}}. 

We define {$T^*$} \unboldmath as the ``optimal'' threshold: the \emph{first} threshold given to $\texttt{\hyperref[sec:weak]{h-index-weak-estimator}}$ that is not larger than $\hindex A$, namely, $T^* \leq \hindex{A} < 4 \cdot T^*$.  
The following claim bounds the probability that the while-loop in step two of \texttt{\hyperref[alg:h]{h-index-estimator}} does not stop even after iteration $T^*$.
\begin{claim}\label{clm:T*}
	  $\Pr\paren{\textnormal{{$\texttt{\hyperref[alg:h]{h-index-estimator}}$} continues its while-loop beyond $T^*$}} \leq \delta/2$. 
\end{claim}
\begin{proof}
	Assuming \textnormal{$\texttt{\hyperref[alg:h]{h-index-estimator}}$} reaches the iteration $T^*$, we have that $\hindex{A} \geq T^*$. Thus, by part $(i)$ of~\Cref{lem:weakEstimator}, 
	each of the $r_1$ runs of {$\texttt{\hyperref[sec:weak]{h-index-weak-estimator}}$} in this iteration outputs \emph{Small} with probability at most $1/16$. Let $X$ be the random variable 
	that denotes the number of these $r_1$ runs that output \emph{Small}. We thus have $\expect{X} \leq r_1/16$. Moreover, $X$ is a sum of $r_1$ independent random variables
	and thus by the Chernoff bound (\Cref{prop:conc}), we have, 
	\[
		\Pr\paren{X \geq r_1/2} \leq \Pr\paren{\card{X-\expect{X}} \geq 7 \cdot r_1/16} \leq 2 \cdot \exp\paren{-\frac{7 \cdot r_1}{48}} < \delta/2,
	\]
	as $r_1 = 7\ln(8/\delta)$. This concludes the proof as for $\texttt{\hyperref[alg:h]{h-index-estimator}}$ to continue beyond the iteration $T^*$ we need $X \geq r_1/2$, which happens with probability $< \delta/2$. 
\end{proof}

In the following, we condition on the complement of the event in~\Cref{clm:T*} which happens with probability at least $1-\delta/2$ -- this means we have only run the while-loop until at most iteration $T^*$. Let $T_0 = n, T_1 = n/4, \ldots,T_t = n/4^t = T^*$ 
denote the thresholds in these  iterations. By~\Cref{lem:weakEstimator} on the runtime of $\texttt{\hyperref[sec:weak]{h-index-weak-estimator}}$ we have, 
\begin{align*}
	\text{runtime of while-loop} = \sum_{j=0}^{t} O(\frac{n}{T_j}) \cdot O(\ln{(1/\delta)}) &= O\paren{\frac{n}{T^*} \cdot \ln{(1/\delta})} \cdot \sum_{j=0}^{t} \frac{1}{4^j} \\&= O\paren{\frac{n}{\hindex{A}} \cdot \ln{(1/\delta)}},
\end{align*}
since $T^*$ is a $4$-approximation to $\hindex{A}$ by definition and the given geometric series converges. 

Moreover, by~\Cref{lem:algStrong} on the runtime of $\texttt{\hyperref[sec:strong]{h-index-strong-estimator}}$, in this case, we have that the last line of the algorithm
takes $O(\frac{n \cdot \ln{(1/\delta)}}{\eps^2 \cdot T^*}) = O(\frac{n \cdot \ln{(1/\delta)}}{\eps^2 \cdot \hindex{A}})$
time as well, again since $T^*$ is a $4$-approximation to $\hindex{A}$ (computing the medians can be done with the Median-of-Medians algorithm in $O(r_2) = O(\log{(1/\delta)})$ time which is negligible in the above bounds). 

All in all, we have that with probability $1-\delta/2$, the algorithm runs in $O(\frac{n \cdot \ln{(1/\delta)}}{\eps^2 \cdot \hindex{A}})$ time. 
\end{proof}

\subsubsection{The Analysis}

We prove the correctness of our algorithm in this subsection. Consider the parameter $T^*$ defined earlier as the ``optimal'' threshold in the while-loop, meaning that $T^* \leq \hindex{A} < 4 \cdot T^*$.  
There are two potential sources for error: 
\begin{enumerate}
	\item \textbf{Event $\event_{weak}$}: In the while-loop, $\texttt{\hyperref[sec:weak]{h-index-weak-estimator}}$ outputs \emph{Large} for an iteration $T > 16T^*$; assuming this happens, the threshold passed to 
	$\texttt{\hyperref[sec:strong]{h-index-strong-estimator}}$ is not necessarily valid, meaning that it may not be a lower bound on $\hindex{A}$. 
	\item \textbf{Event $\event_{strong}$}: The threshold $T$ obtained by the runs of $\texttt{\hyperref[sec:weak]{h-index-weak-estimator}}$ in the while-loop satisfies $T \leq 16T^*$ and thus is valid, but $\texttt{\hyperref[sec:strong]{h-index-strong-estimator}}$ nevertheless fails to output an accurate estimate 
	of $\hindex{A}$. 
\end{enumerate}

Among these, the probability of the second event is quite easy to bound using~\Cref{lem:algStrong}. Thus, in the following, we focus primarily on proving the first part. 
\begin{claim}\label{lem:while-loop}
	In \textnormal{$\texttt{\hyperref[alg:h]{h-index-estimator}}$}, for any $T = 4^{\ell} \cdot T^*$ for an integer $\ell \geq 2$, 
	\[\Pr\paren{\textnormal{the while-loop terminates at iteration $T$}} \leq \paren{\delta/8}^{\ell-1}.\] 
\end{claim}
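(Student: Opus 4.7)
The plan is to reduce the event that the while-loop terminates at iteration $T = 4^\ell T^*$ to a tail bound on a binomial random variable with very small success probability. If the loop terminates at iteration $T$, then in particular the majority of the $r_1$ independent runs of $\texttt{h-index-weak-estimator}(A, T)$ at that iteration must have returned \emph{Large}, i.e., at least $\lceil r_1/2 \rceil$ of them did; I will upper bound just this event, ignoring the (favourable) conditioning that every earlier iteration returned majority \emph{Small}. The key structural observation is that since $\ell \geq 2$ and $\hindex{A} < 4 T^*$ by the definition of $T^*$, we have $\hindex{A} < 4T^* = T/4^{\ell-1} \leq T/4$, so the hypothesis of part $(ii)$ of \Cref{lem:weakEstimator} applies at threshold $T$.

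Applying part $(ii)$ of \Cref{lem:weakEstimator} then gives that each individual run of $\texttt{h-index-weak-estimator}(A, T)$ returns \emph{Large} with probability at most
\[
p_\ell := \frac{\hindex{A}}{4T} < \frac{4T^*}{4\cdot 4^\ell\cdot T^*} = 4^{-\ell}.
\]
Let $Y$ denote the number of the $r_1$ independent runs that return \emph{Large}; then $Y$ is stochastically dominated by $\mathrm{Bin}(r_1, 4^{-\ell})$, and I need to bound $\Pr(Y \geq r_1/2)$.

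For this step I would use an elementary union bound over the upper tail of the binomial (Chernoff would also work, but a direct bound is cleanest here): since $p_\ell \leq 1$, for any $k \geq r_1/2$ we have $p_\ell^k \leq p_\ell^{r_1/2}$, and therefore
\[
\Pr\!\paren{Y \geq r_1/2} \leq \sum_{k \geq r_1/2} \binom{r_1}{k} p_\ell^{r_1/2} \leq 2^{r_1} \cdot \paren{4^{-\ell}}^{r_1/2} = 2^{-(\ell-1)r_1}.
\]
To convert this into the claimed form, I would plug in $r_1 = 7\ln(8/\delta)$: because $7\ln 2 > 1$, one has $2^{r_1} = (8/\delta)^{7\ln 2} \geq 8/\delta$, so $2^{-r_1} \leq \delta/8$, and raising to the $(\ell-1)$-th power yields $\Pr(Y \geq r_1/2) \leq (\delta/8)^{\ell-1}$, as desired.

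There is no real obstacle here: the proof is driven by the factor-$4$ geometric gap between successive thresholds, which forces the per-run failure probability $p_\ell = 4^{-\ell}$ to decay much faster than the $2^{r_1}$ factor from enumerating majority witnesses; the constant $7$ in $r_1 = 7\ln(8/\delta)$ is precisely what one needs for the resulting per-level contraction factor to be at most $\delta/8$.
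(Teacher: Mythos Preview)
Your proof is correct and follows essentially the same approach as the paper: both arguments observe that $\hindex{A} < T/4$ when $\ell \geq 2$, apply part~$(ii)$ of \Cref{lem:weakEstimator} to bound the per-run probability of \emph{Large} by $4^{-\ell}$, and then use a union bound over majority subsets together with $\sum_k\binom{r_1}{k} \leq 2^{r_1}$ (the paper uses the single term $\binom{r_1}{r_1/2} \leq 2^{r_1}$, but the resulting bound is identical) to get $2^{-(\ell-1)r_1} \leq (\delta/8)^{\ell-1}$ via $r_1 = 7\ln(8/\delta)$.
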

\begin{proof}
	Recall that at iteration $T$, we have $r_1$ runs of $\texttt{\hyperref[sec:weak]{h-index-weak-estimator}}$ with threshold $T$. For $j \in [r_1]$, let $E_j$ denote the event that the $j$-th run of 
	$\texttt{\hyperref[sec:weak]{h-index-weak-estimator}}$ outputs \emph{Large}. By the majority rule in the algorithm, we have, 
	\begin{align*}
		\Pr\paren{\textnormal{while-loop terminates at $T$}} &\leq \Pr\paren{\textnormal{$\exists S \subseteq [r_1], \card{S}=\frac{r_1}{2}$ such that $\bigwedge_{j \in S} E_j$}} \\
		&\leq {{r_1}\choose{\nicefrac{r_1}{2}}} \Pr\paren{\textnormal{$\texttt{\hyperref[sec:weak]{h-index-weak-estimator}}$ outputs \emph{Large}}}^{{r_1}/{2}},
	\end{align*}
	where we use the union bound and the independence of the $r_1$ runs of $\texttt{\hyperref[sec:weak]{h-index-weak-estimator}}$. 
	Now, note that since $T \geq 16T^* > 4\hindex{A}$, we can apply Property $(ii)$ of~\Cref{lem:weakEstimator} and have that 
	\[
		 \Pr\paren{\textnormal{$\texttt{\hyperref[sec:weak]{h-index-weak-estimator}}$ outputs \emph{Large}}} \leq \frac{\hindex{A}}{4\,T} \leq \frac{1}{4^{\ell}},
	\]
	by the value of $T$. Combining this with the previous equation gives us 
	\[
		\Pr\paren{\textnormal{while-loop terminates at  $T$}} \leq 2^{r_1} \cdot \paren{\frac{1}{4^{\ell}}}^{r_1/2} = \paren{\frac{1}{2}}^{r_1 \cdot (\ell-1)} < \paren{\frac{\delta}{8}}^{\ell-1}
	\]
	by the choice of $r_1 = 7\ln{(8/\delta})$, concluding the proof. 
\end{proof}

We can now bound the error probability due to event $\event_{weak}$. We have, 
\begin{align*}
	\Pr\paren{\event_{weak}} &\leq \sum_{\ell \geq 2} \Pr\paren{\textnormal{the while-loop terminates at  $T = 4^{\ell} \cdot T^*$}} \\
	&\leq \sum_{\ell \geq 2} \paren{\frac{\delta}{8}}^{\ell-1} \tag{by~\Cref{lem:while-loop}} \\
	&= \frac{(\delta/8)}{1-(\delta/8)}  \tag{as $\sum_{j=1}^{\infty} x^j = \frac{x}{1-x}$ for $x \in (0,1)$} 
	< \frac{\delta}{4}. 
\end{align*}

We now bound the other source of error. Assuming $\event_{weak}$ does not happen,  for the parameter $T$ that the while-loop terminates on, we have $T \leq 16T^* \leq 16\hindex{A}$ by the definition of $T^*$. 
This implies that the parameter $T/16$ passed to $\texttt{\hyperref[sec:strong]{h-index-strong-estimator}}$ is a lower bound on $\hindex{A}$. Thus, by~\Cref{lem:algStrong}, each of the $r_2$ runs of 
$\texttt{\hyperref[sec:strong]{h-index-strong-estimator}}$ outputs a $(1\pm \eps)$-approximation to $\hindex{A}$ with probability at least $2/3$. 
\begin{claim}\label{clm:median-trick}
	$\Pr\paren{\event_{strong} \mid \overline{\event_{weak}}} \leq {\delta}/{4}$. 
\end{claim}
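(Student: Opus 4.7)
The plan is to establish this claim via a standard median-of-estimates analysis combined with a Chernoff bound, leveraging the independence of the $r_2$ runs of $\texttt{h-index-strong-estimator}$.

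First I would observe that under $\overline{\event_{weak}}$, the while-loop terminates at some threshold $T$ satisfying $T \leq 16 T^*$, and by the definition of $T^*$ we have $T^* \leq \hindex{A}$, so $T/16 \leq T^* \leq \hindex{A}$. This confirms that the argument $T/16$ passed into $\texttt{h-index-strong-estimator}$ is a valid lower bound on $\hindex{A}$, which is exactly the precondition needed to invoke~\Cref{lem:algStrong}. Therefore, each of the $r_2$ independent invocations returns an estimate $\tilde{h}_i$ such that $|\tilde{h}_i - \hindex{A}| \leq \eps \cdot \hindex{A}$ with probability at least $2/3$, independently across $i \in [r_2]$.

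Next I would define the indicator $Z_i$ for the event that the $i$-th run is ``good'' (i.e., its output lies in $[(1-\eps)\hindex{A},(1+\eps)\hindex{A}]$) and set $Z := \sum_{i=1}^{r_2} Z_i$. The key structural observation is that the median of $r_2$ numbers lies in an interval $I$ whenever strictly more than $r_2/2$ of them lie in $I$; hence if $Z > r_2/2$, then the reported median is itself a $(1\pm\eps)$-approximation to $\hindex{A}$. Therefore $\Pr(\event_{strong} \mid \overline{\event_{weak}}) \leq \Pr(Z \leq r_2/2)$, and it suffices to upper bound the latter.

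Finally, I would apply the lower-tail Chernoff bound of~\Cref{prop:conc}. Since $\expect{Z} \geq 2r_2/3$, writing $r_2/2 = (1 - 1/4)\cdot (2r_2/3)$, we get
\[
	\Pr(Z \leq r_2/2) \leq \exp\paren{-\frac{(1/4)^2 \cdot (2r_2/3)}{3}} = \exp\paren{-\frac{r_2}{72}}.
\]
Substituting $r_2 = 108\ln(8/\delta)$ gives $\exp(-\tfrac{3}{2}\ln(8/\delta)) = (\delta/8)^{3/2} \leq \delta/4$, where the last inequality holds for every $\delta \in (0,1)$ since $\delta^{1/2} \leq 1 \leq 8^{3/2}/4$. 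This yields $\Pr(\event_{strong} \mid \overline{\event_{weak}}) \leq \delta/4$, proving the claim. I do not anticipate any genuine obstacle here: the only thing to be careful about is the median-to-majority reduction (ensuring that a strict majority of good estimates forces the median to be good) and keeping the constants consistent with the choice $r_2 = 108\ln(8/\delta)$.
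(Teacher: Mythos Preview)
Your proof is correct and follows essentially the same approach as the paper: both use the median trick together with a Chernoff bound on indicator variables for the $r_2$ independent runs of \texttt{h-index-strong-estimator}. The only cosmetic difference is that you count ``good'' runs $Z_i$ and apply the lower-tail Chernoff bound, whereas the paper counts ``bad'' runs $X_j$ and applies the large-deviation form of Chernoff, arriving at $2\exp(-r_2/108)=\delta/4$ instead of your (slightly sharper) $\exp(-r_2/72)\leq \delta/4$.
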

\begin{proof}
For any $j \in [r_2]$, let $X_j$ be an indicator random variable which is $1$ iff the $j$-th run of $\texttt{\hyperref[sec:strong]{h-index-strong-estimator}}$ outputs an estimate which is \emph{not} a $(1\pm \eps)$-approximation to $\hindex{A}$. 
Let $X = \sum_{j=1}^{r_2} X_j$. By~\Cref{lem:algStrong}, we have $\expect{X} \leq r_2/3$. By the Chernoff bound (\Cref{prop:conc}), 
\[
	\Pr\paren{\event_{strong} \mid \overline{\event_{weak}}} \leq \Pr\paren{X \geq r_2/2} \leq \Pr\paren{\card{X - \expect{X}} \geq r_2/6} \leq 2 \cdot \exp\paren{-\frac{r_2}{108}} = \frac{\delta}{4},
\]
by the choice of $r_2 = 108\,\ln{(8/\delta)}$. 
\end{proof}

Therefore, by the union bound, the total probability of error is at most $\delta/4 + \delta/4 = \delta/2$. This concludes the analysis of $\texttt{\hyperref[alg:h]{h-index-estimator}}$.

% !TeX root = main.tex 
%!TEX root = main.tex

\section{The Lower Bound}\label{sec:lower} 

We now prove the asymptotic optimality of the bounds obtained by our algorithm in~\Cref{thm:alg}. % in this section. % \hyperref[alg:h]{h-index-estimator} asymptotically achieves the fastest runtime with high probability possible for this instance of the h-index problem. 

\begin{theorem}\label{thm:lower}
Any algorithm that, given query access to an array $A[1:n]$, approximation parameter $\varepsilon \in (0,1/4)$, and confidence parameter $\delta \in (0,1/100)$, with probability $1-\delta$ uses at most $q$ queries and 
outputs an estimate $\tilde{h}$ such that $|\tilde{h} - \hindex{A} | \leq \eps \cdot \hindex A$ needs to satisfy 
$q = \Omega(\min(n, \frac{n \cdot \ln (1/\delta)}{\varepsilon^2 \cdot \hindex A}))$.
\end{theorem}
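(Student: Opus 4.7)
The plan is to apply Yao's minimax principle (\Cref{prop:yao}, easy direction) to reduce the randomized query-complexity lower bound to a distributional one over a pair of product distributions $\mu_{\mathsf Y},\mu_{\mathsf N}$ on arrays $A \in \{0,n\}^n$. In $\mu_{\mathsf Y}$ each coordinate $A[i]$ is independently set to $n$ with probability $p_{\mathsf Y} := h(1+3\eps)/n$; in $\mu_{\mathsf N}$ it is $n$ with probability $p_{\mathsf N} := h(1-3\eps)/n$ (and $0$ otherwise in both cases). Since every nonzero entry equals $n$, $\hindex{A}$ is exactly the number of nonzero entries, a $\mathrm{Bin}(n,p)$ variable. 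By \Cref{prop:conc}, $\hindex{A}$ concentrates within a $(1\pm \eps/4)$ factor of its mean $h(1\pm 3\eps)$ with probability at least $1-\delta/100$, provided $\eps^2 h = \Omega(\ln(1/\delta))$, which is exactly the regime where the target bound stays below the trivial $n$. Under this concentration the $(1\pm\eps)$-approximation intervals around $h_{\mathsf Y}$ and $h_{\mathsf N}$ are strictly disjoint, so any algorithm satisfying the hypothesis of \Cref{thm:lower} yields a distinguisher between $\mu_{\mathsf Y}$ and $\mu_{\mathsf N}$ with success probability $\geq 1 - O(\delta)$.

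Next I argue that, because the coordinates are independent in both distributions, adaptivity buys nothing: re-querying a known index reveals no new information, and for any $q$ distinct indices $i_1,\ldots,i_q$ (chosen adaptively or not) the answers $A[i_1],\ldots,A[i_q]$ are distributed as $q$ i.i.d.\ samples from $\Bern{p_{\mathsf Y}}$ or $\Bern{p_{\mathsf N}}$ (up to the $\{0,n\}$ rescaling). Combined with \Cref{fact:tvd-sample}, the distinguishing guarantee forces $\tvd{\Bern{p_{\mathsf Y}}^{\otimes q}}{\Bern{p_{\mathsf N}}^{\otimes q}} \geq 1 - O(\delta)$. The extension of Pinsker's inequality in \Cref{prop:pinsker} then upgrades this into $\kl{\Bern{p_{\mathsf Y}}^{\otimes q}}{\Bern{p_{\mathsf N}}^{\otimes q}} = \Omega(\ln(1/\delta))$.

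Finally, the chain rule for product distributions (\Cref{fact:chainRuleKL}) and the Bernoulli KL bound (\Cref{prop:bernKL}) combine to give
\[
q \cdot \frac{(p_{\mathsf Y} - p_{\mathsf N})^2}{p_{\mathsf N}(1 - p_{\mathsf N})}
\geq q \cdot \kl{\Bern{p_{\mathsf Y}}}{\Bern{p_{\mathsf N}}}
= \Omega(\ln(1/\delta)).
\]
Substituting $p_{\mathsf Y} - p_{\mathsf N} = 6\eps h/n$ and $p_{\mathsf N}(1-p_{\mathsf N}) = \Theta(h/n)$ (valid when $h \leq n/2$; the remaining regime $h > n/2$ is handled by the symmetric construction in which an $\eps$-biased Bernoulli pattern of zeros is planted in an otherwise all-$n$ array, so that the roles of $0$ and $n$ are swapped) yields $q = \Omega(n\ln(1/\delta)/(\eps^2 h))$, as desired.

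The main technical obstacle is retaining the $\ln(1/\delta)$ factor on the right-hand side: the textbook Pinsker inequality would deliver only a constant lower bound on KL from a $(1 - O(\delta))$-TV distance, producing a lower bound without any $\ln(1/\delta)$ dependence. It is the sharper \Cref{prop:pinsker} that makes the logarithmic factor come through. Everything else is essentially bookkeeping, but one has to split the failure budget $\delta$ cleanly among the three error sources (concentration of $\hindex{A}$, the algorithm's own error, and the distinguisher-to-TV conversion) and choose the constant in $p_{\mathsf Y}, p_{\mathsf N}$ so that the $(1\pm\eps)$-approximation intervals around $h_{\mathsf Y}$ and $h_{\mathsf N}$ are genuinely disjoint even after accounting for the $(1\pm\eps/4)$ concentration slack.
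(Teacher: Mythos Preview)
Your proposal is correct and follows essentially the same approach as the paper. The paper packages the core step as a standalone lower bound for a ``Popcount Thresholding Problem'' (\Cref{lem:ORTP}) and then reduces $h$-index estimation to it, whereas you build the hard Bernoulli product distribution directly on arrays $A\in\{0,n\}^n$; but the analytic engine---Yao's principle, the i.i.d.\ structure making adaptivity useless, the chain rule plus \Cref{prop:bernKL} to bound the KL of $q$ Bernoulli samples, and the sharpened Pinsker inequality (\Cref{prop:pinsker}) to extract the $\ln(1/\delta)$ factor---is identical in both. The paper's extra layer of abstraction buys reusability (it applies the same \ORTP lemma to triangle counting in \Cref{sec:lowerTriangle}); your direct route is slightly shorter for the $h$-index result alone. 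One small point the paper handles explicitly and you leave implicit: since the theorem only promises the query bound $q$ with probability $1-\delta$ and $q$ may depend on $\hindex{A}$, the paper's reduction caps the number of queries at a fixed threshold $\tau$ and defaults to \emph{No} if the algorithm overruns---in your setting this is harmless because under both $\mu_{\mathsf Y}$ and $\mu_{\mathsf N}$ the $h$-index concentrates at $\Theta(h)$, but it is worth stating.
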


To prove~\Cref{thm:lower}, we define a new problem which we call the \emph{Popcount Thresholding Problem (PTP)} and prove a lower bound on its randomized query complexity. We will then perform 
a  reduction from this problem to establish our theorem. 
\begin{remark}
In the following, we focus on proving that when $\hindex A \geq \ln(1/\delta) \cdot 12/\eps^2$, the randomized query complexity is $\Omega((n \cdot \ln(1/\delta))/(\eps^2 \cdot \hindex A))$. We claim this is without loss of generality
when proving~\Cref{thm:lower} because of the following. Let us suppose that $1 \leq \hindex A < \ln(1/\delta) \cdot 12/\eps^2$. There exists some $\eps' > \eps$ and $\delta' > \delta$ such that $\hindex A = \ln(1/\delta') \cdot 12/\eps'^2$. 
For this choice of parameter, we can apply the first part of the result to get a lower bound of $(n\cdot \ln(1/\delta')) / (\eps'^2 \cdot \hindex A) = \Omega(n)$, which implies~\Cref{thm:lower} in this case as well. 
\end{remark}
In passing, we note that \ORTP seems quite a natural and general problem of its own independent interest; we will also use this problem in the subsequent section to 
prove asymptotically optimal lower bounds for the well-studied problem of estimating the number of triangles in a graph.

\subsection{Popcount Thresholding Problem (PTP)} \label{sec:orThresholding}

We define the Popcount Thresholding Problem as follows. 

\begin{problem}\label{prob:ortp}
	In $\ORTP_{m,k,\gamma}$, for integers $m,k,\geq 1$ and parameter $\gamma \in (0,1)$, we are given a string $x \in \set{0,1}^m$ sampled with equal probability from either distribution $D_0$ or $D_1$ defined as follows: 
	\begin{itemize}
	\item In $D_0$, for each index $i \in [m]$, $x_i$ is independently set to $1$ with probability $p_0 := (1-2\gamma) \cdot k/m$;
	\item In $D_1$, for each index $i \in [m]$, $x_i$ is independently set to $1$ with probability  $p_1 := (1+2\gamma) \cdot k/m$. 
	\end{itemize}
	The answer is \emph{Yes} if $x$ was drawn from $D_1$, and it is \emph{No} if $x$ was drawn from $D_0$. 
\end{problem}

We prove the following lemma on the query complexity of \ORTP. 

\begin{lemma}\label{lem:ORTP}
	For any $\gamma \in (0,1/4)$, $\delta \in (0,1/100)$, and integers $m \geq 1$, $\ln{(1/\delta)} \cdot 12/\gamma^2 \leq k \leq m/6$, 
	\[
	\R{\ORTP_{m,k,\gamma}}{\delta} \geq \frac{m \cdot \ln{(1/(4\delta))}}{24\,\gamma^2 \cdot k},
	\]
	 where $R_{\delta}(\cdot)$ denotes the randomized query complexity with error probability $\delta$. 
\end{lemma}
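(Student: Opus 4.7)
The plan is to invoke the easy direction of Yao's minimax principle, \Cref{prop:yao} part~$(i)$, with $\mu$ being precisely the problem's input distribution---draw $b \in \{0,1\}$ uniformly and then $x \sim D_b$---so it suffices to lower bound the queries any deterministic algorithm needs to recover $b$ with probability at least $1-\delta$ under $\mu$. Fix a deterministic algorithm $\ALG$ that makes $q$ queries and let $\tau_b$ denote the distribution of the \emph{transcript}---the concatenation of the $q$ answer bits it receives---conditioned on $x \sim D_b$. Since $\ALG$'s output is a deterministic function of its transcript, \Cref{fact:tvd-sample} implies that its success probability is at most $\tfrac12 + \tfrac12 \tvd{\tau_0}{\tau_1}$, so success probability $\geq 1-\delta$ forces
\[
\tvd{\tau_0}{\tau_1} \;\geq\; 1 - 2\delta.
\]

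Next I would reduce the transcript TVD to a single-bit KL calculation. The key observation is that, because $D_0$ and $D_1$ are product Bernoulli distributions and a deterministic algorithm may without loss of generality query each coordinate at most once, conditioning on any prefix of answers leaves the distribution of the next answer unchanged: whichever index $i$ the algorithm picks next, $x_i \sim \Bern{p_b}$ independently of the previously queried bits. Hence $\tau_b$ is simply $\Bern{p_b}^{\otimes q}$. Combining $\tvd{\tau_0}{\tau_1} \geq 1-2\delta$ with \Cref{prop:pinsker} gives $\kl{\tau_0}{\tau_1} \geq \ln\paren{\tfrac{1}{4\delta}}$, and the chain rule (\Cref{fact:chainRuleKL}) applied to these two product distributions yields
\[
\kl{\tau_0}{\tau_1} \;=\; q \cdot \kl{\Bern{p_0}}{\Bern{p_1}}.
\]

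It then remains to upper bound $\kl{\Bern{p_0}}{\Bern{p_1}}$ by $24\gamma^2 k/m$. Using \Cref{prop:bernKL}, this KL is at most $(p_0-p_1)^2 / (p_1(1-p_1))$, and $(p_0-p_1)^2 = 16\gamma^2 k^2/m^2$. Since $\gamma \leq 1/4$ and $k \leq m/6$, we have $p_1 = (1+2\gamma)k/m \leq 1/4$ and $p_1 \geq k/m$, giving $p_1(1-p_1) \geq \tfrac{3}{4}\cdot k/m$. Substituting yields $\kl{\Bern{p_0}}{\Bern{p_1}} \leq \tfrac{64}{3}\cdot \gamma^2 k/m \leq 24 \gamma^2 k/m$, and the lemma follows upon combining with the KL lower bound and rearranging for $q$.

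I expect the main subtlety to be justifying the step that adaptive queries against a product distribution induce an iid Bernoulli transcript: it is standard but deserves an explicit line since the query indices themselves depend on previous answers, so the independence claim requires noting both that no index is queried twice and that the unqueried coordinates remain distributed as in the original product. The hypotheses $\gamma \leq 1/4$ and $k \leq m/6$ are used exactly once, to control $p_1(1-p_1)$; the hypothesis $k \geq 12\ln(1/\delta)/\gamma^2$ is not needed for any inequality above and serves only to ensure the resulting lower bound on $q$ does not vacuously exceed the trivial upper bound of $m$.
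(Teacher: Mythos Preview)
Your proposal is correct and follows essentially the same approach as the paper: both invoke the easy direction of Yao's principle, observe that the transcript of a deterministic $q$-query algorithm against $D_\theta$ is $\Bern{p_\theta}^{q}$, bound the per-bit KL via \Cref{prop:bernKL} by $24\gamma^2 k/m$, and then combine the chain rule with the Pinsker-type bound (\Cref{prop:pinsker}) and \Cref{fact:tvd-sample} to force $q \geq m\ln(1/(4\delta))/(24\gamma^2 k)$. Your explicit remark that without loss of generality no index is queried twice cleanly justifies the product-Bernoulli claim, and your observation that the lower bound on $k$ is unused in this lemma is accurate.
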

\begin{proof}
To prove~\Cref{lem:ORTP}, we use the easy direction of Yao's minimax principle (\Cref{prop:yao}) which allows us to focus on \emph{deterministic} algorithms for $\ORTP$ its input distribution. As per~\Cref{prob:ortp}, the input distribution is $D = (1/2) \cdot D_0 + (1/2) \cdot D_1$.

For the rest of the proof, let $\ALG$ be any deterministic query algorithm on $\dist$ with the worst-case number of queries 
\begin{align}
q(\ALG) := q < \frac{m \cdot \ln{(1/(4\delta))}}{24\,\gamma^2 \cdot k} \label{eq:choice-q}.
\end{align}
Without loss of generality, we assume that $\ALG$ always makes $q$ queries on any input (by potentially making 
``dummy" queries to reach $q$ if needed). For an input $x \sim \dist$, we use $\QA{x} \in \set{0,1}^q$ to denote the string of answers returned to the query algorithm based on $x$. 

\paragraph*{Distribution of $\QA{x}$.} A key observation is that given only $\QA{x} = (b_1,\ldots,b_q)$, since $\ALG$ 
is a deterministic algorithm, we will learn the value of exactly $q$ specific entries in $x$: $b_1$ is the value of the index of $x$ queried first by $\ALG$, then, $b_2$ is the value of the second index queried by $\ALG$ where the query is uniquely determined after seeing the answer $b_1$ to the first query, and so on and so forth. Thus, for any choice of $\theta \in \set{0,1}$, \emph{conditioned on} $x$ being sampled from $D_{\theta}$, for any $i \in [m]$, \emph{independent} of the value of $(b_1,\ldots,b_{i-1})$, the 
value of $b_i$ is sampled from a Bernoulli distribution with mean $p_{\theta}$. This means that:
\begin{align*}
	&\text{distribution $(\QA{x} \mid D_0)$ is $\Bern{p_0}^{q}$} \qquad \text{and} \qquad \text{distribution $(\QA{x} \mid D_1)$ is $\Bern{p_1}^{q}$}. \label{eq:means}
\end{align*}
The following claim bounds the KL-divergence (\Cref{prelim-eq:kl}) between these two distributions. 

\begin{claim}\label{clm:kl}
	For any $q \geq 1$ and $0 < p_0,p_1 < 1/3$, we have, 
		$\kl{\Bern{p_0}^{q}}{\Bern{p_1}^q} < \ln{(1/(4\delta))}$.
\end{claim}
\begin{proof}
	By the chain rule of KL-divergence and using the fact that both arguments are product distributions (\Cref{fact:chainRuleKL}), we have
	\[
		\kl{\Bern{p_0}^{q}}{\Bern{p_1}^q} = q \cdot \kl{\Bern{p_0}}{\Bern{p_1}}. 
	\]
	Moreover, for each term, using~\Cref{prop:bernKL}, we have
	\[
		\kl{\Bern{p_0}}{\Bern{p_1}} \leq \dfrac{(p_0-p_1)^2}{p_1 \cdot (1-p_1)}  \leq \frac{(4\gamma \cdot k/m)^2}{(1+2\gamma)\cdot k/m \cdot 2/3} \leq 24\gamma^2 \cdot \frac{k}{m} < \frac{\ln{(1/(4\delta))}}{q}, 
	\]
	by the choice of $q$ in~\Cref{eq:choice-q}, concluding the proof. \qed$_{\,\textnormal{\Cref{clm:kl}}}$
	
\end{proof}

Let us now use~\Cref{clm:kl} to conclude the proof of~\Cref{lem:ORTP}. As argued earlier, all the information that is revealed to the algorithm $\ALG$ is the string $\QA{x}$ on an input $x \sim \dist$, and its task is to distinguish whether $x$ is sampled from $\dist_0$ or $\dist_1$. 
By~\Cref{fact:tvd-sample}, the best probability of success of $\ALG$ is then:
\begin{align*}
	\frac{1}{2}+\frac{1}{2} \cdot \tvd{(\QA{x} \mid \dist_0)}{(\QA{x} \mid \dist_1)} &\leq 1-\frac{1}{4} \cdot \exp\paren{-\kl{\QA{x} \mid \dist_0}{\QA{x} \mid \dist_1}} \tag{by the extension of Pinsker's inequality in~\Cref{prop:pinsker}} \\
	&=1-\frac{1}{4} \cdot \exp\paren{-\kl{\Bern{p_0}^q}{\Bern{p_1}^q}} \tag{by the distribution of $\QA{x}$ argued earlier} \\
	&< 1- \frac{1}{4} \cdot \exp\paren{\ln{(4\delta)}} \tag{by~\Cref{clm:kl} as $k \leq m/6$, $\gamma < 1/4$, and thus $p_0,p_1 < 1/3$}
	= 1-\delta. 
\end{align*}
This means that $\ALG$ can  succeed with probability $<1-\delta$ in distinguishing between $\dist_0$ and $\dist_1$. 
Combined with the easy direction of Yao's minimax principle (namely, an averaging principle,~\Cref{prop:yao}), this concludes the proof of~\Cref{lem:ORTP}. 
\end{proof}

Before moving on from this section, we also prove the following auxiliary lemma on the range of $|x|_1$ drawn from $D$, which will be used in our subsequent reductions from $\ORTP$. 

\begin{lemma}\label{lem:distlemma}
	In the distribution $\dist$, 
	\[
		\Pr\paren{\card{x}_1 > (1-\gamma) \cdot k \mid \dist_0} \leq \delta \quad \text{and} \quad \Pr\paren{\card{x}_1 < (1+\gamma) \cdot k \mid \dist_1} \leq \delta.
	\]
\end{lemma}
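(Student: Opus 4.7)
The plan is to observe that $|x|_1$ is a sum of $m$ independent Bernoulli indicators, so both bounds reduce to textbook multiplicative Chernoff estimates (\Cref{prop:conc}). Under $\dist_0$ the mean is $\expect{|x|_1} = (1-2\gamma) k$, and under $\dist_1$ the mean is $\expect{|x|_1} = (1+2\gamma) k$. In each case, the threshold $(1\mp\gamma) k$ sits roughly a $\gamma$-fraction away from the mean, so the Chernoff exponent will be on the order of $\gamma^2 k$, and the hypothesis $k \geq 12\ln(1/\delta)/\gamma^2$ will make this at least $\ln(1/\delta)$ up to constants.

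For the first inequality, I would rewrite $(1-\gamma) k = (1+\eps') \cdot (1-2\gamma) k$ with $\eps' = \gamma/(1-2\gamma)$, which lies in $(0,1)$ since $\gamma < 1/4$. Applying the upper-tail Chernoff bound gives
\[
\Pr\paren{|x|_1 > (1-\gamma) k \mid \dist_0} \leq \exp\paren{-\frac{(\eps')^2 (1-2\gamma) k}{3}} = \exp\paren{-\frac{\gamma^2 k}{3(1-2\gamma)}} \leq \exp\paren{-\frac{\gamma^2 k}{3}},
\]
and plugging in $k \geq 12\ln(1/\delta)/\gamma^2$ yields $\exp(-4\ln(1/\delta)) = \delta^4 \leq \delta$.

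For the second inequality, I would symmetrically rewrite $(1+\gamma) k = (1-\eps'') \cdot (1+2\gamma) k$ with $\eps'' = \gamma/(1+2\gamma) \in (0,1)$ and apply the lower-tail Chernoff bound to obtain
\[
\Pr\paren{|x|_1 < (1+\gamma) k \mid \dist_1} \leq \exp\paren{-\frac{\gamma^2 k}{3(1+2\gamma)}} \leq \exp\paren{-\frac{2\gamma^2 k}{9}},
\]
using $1+2\gamma < 3/2$. Again $k \geq 12\ln(1/\delta)/\gamma^2$ makes the right-hand side at most $\delta^{8/3} \leq \delta$.

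There is no real obstacle here; the only mildly fiddly step is converting an additive $\gamma k$ deviation around the true mean $(1\mp 2\gamma) k$ into the multiplicative form required by the Chernoff bound, and then checking that the constants in the assumed lower bound on $k$ are large enough to absorb the resulting factors. Both checks go through comfortably because $\gamma < 1/4$ makes $1-2\gamma$ and $1/(1+2\gamma)$ bounded constants.
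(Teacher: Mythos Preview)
Your proposal is correct and follows essentially the same approach as the paper: both recognize $|x|_1$ as a sum of $m$ independent Bernoulli indicators with mean $(1\mp 2\gamma)k$, rewrite the target threshold $(1\mp\gamma)k$ as a $\gamma/(1\mp 2\gamma)$ multiplicative deviation from the mean, apply the Chernoff bound from \Cref{prop:conc}, and finish using $k \geq 12\ln(1/\delta)/\gamma^2$ together with $\gamma < 1/4$. The only differences are cosmetic bookkeeping in how the constants are tracked.
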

\begin{proof}
	The proof is by a simple concentration argument using the independence of $x_i$'s for different $i \in [m]$ when $x$ is sampled from either $\dist_0$ or $\dist_1$. 
	
	\paragraph{Distribution $\dist_0$.} In the following, all randomness is with respect to $\dist_0$. We have, 
	\begin{align*}
		\Exp\card{x}_1 = \sum_{i=1}^{m} \Exp\bracket{x_i} = m \cdot p_0 = (1-2\gamma) \cdot k. 
	\end{align*}
	By the independence of the choices of $x$ across indices $i \in [m]$, we can apply the Chernoff bound (\Cref{prop:conc}) and have that 
	\begin{align*}
		\Pr\paren{\card{x}_1 > (1-\gamma) \cdot k} &\leq \Pr\paren{\card{x}_1 - \Exp \card{x}_1 > \frac{\gamma}{1-2\gamma} \cdot \Exp\card{x}_1} \\
		&\leq \exp\paren{-\frac{\gamma^2 \cdot \Exp \card{x}_1}{(1-2\gamma)^2 \cdot 3}} \leq \exp\paren{-{\gamma^2} \cdot \frac{{12}{} \cdot \ln{(1/\delta)}}{(1-2\gamma) \cdot \gamma^2 \cdot 3}} \leq \delta, 
	\end{align*}
	by the bound on $\Exp\card{x}_1$ and the choice of $k$ and $\gamma$ in~\Cref{lem:ORTP}.  This proves the first part. 
	
	\paragraph{Distribution $\dist_1$.} The proof of this part is almost (but not exactly) identical, and we provide it for completeness. In the following, all randomness is with respect to $\dist_1$. We have, 
	\begin{align*}
		\Exp\card{x}_1 = \sum_{i=1}^{m} \Exp\bracket{x_i} = m \cdot p_1 = (1+2\gamma) \cdot k. 
	\end{align*}
	By the independence of the choices of $x$ across indices $i \in [m]$, we can apply Chernoff bound (\Cref{prop:conc}) and have that
	\begin{align*}
		\Pr\paren{\card{x}_1 < (1+\gamma) \cdot k} &\leq \Pr\paren{\Exp \card{x}_1 - \card{x}_1 > \frac{\gamma}{1+2\gamma} \cdot \Exp\card{x}_1}  \\
		&\leq \exp\paren{-\frac{\gamma^2 \cdot \Exp \card{x}_1}{(1+2\gamma)^2 \cdot 3}} \leq \exp\paren{-{\gamma^2} \cdot \frac{{12}{} \cdot \ln{(1/\delta)}}{(1+2\gamma) \cdot \gamma^2 \cdot 3}} \leq \delta, 
	\end{align*}
	by the bound on $\Exp\card{x}_1$ and the choice of $k$ and $\gamma$ in~\Cref{lem:ORTP}.  This proves the second part. 
\end{proof}

\Cref{lem:distlemma} implies that any algorithm that can differentiate whether $\card{x}_1 \geq (1+\gamma) \cdot k$ or $\card{x}_1 \leq (1-\gamma) \cdot k$ with probability $1-\delta$ can also solve $\ORTP$ with probability $1-2\delta$. This is simply because when
$x \sim \dist_{\theta}$ for $\theta \in \set{0,1}$, with probability at most $\delta$, $\card{x}_1$ is not within the ``right'' range for such an algorithm to detect, and with another probability $\delta$, the algorithm may fail to output the correct answer. 
A union bound then implies the bound of $1-2\delta$ on the probability of correctly solving $\ORTP$. We next use this to prove~\Cref{thm:lower}. % and in our extension to triangle counting. 

\subsection{Reducing PTP to the H-Index Problem}

We now prove~\Cref{thm:lower} via a reduction from $\ORTP$ and the lower bound we proved for $\ORTP$ in~\Cref{lem:ORTP}. 

Suppose towards a contradiction that there is an algorithm $\ALG_{h}$ for $h$-index 
that with probability $1-\delta/2$ uses $o(n\ln{(1/\delta)}/(\eps^2\hindex{A}))$ queries on input array $A$ and estimates $\hindex{A}$ to within a $(1\pm \eps)$-factor. Given an instance of $\ORTP_{m,k,\gamma}$, we use $\ALG_h$ to solve $\ORTP$ with probability $1-\delta$ in the following algorithm. 

\begin{tbox}
 \texttt{\hyperref[sec:reductionAlg]{PTP-estimator}($x$, $k$, $\gamma$, $\delta$)}. 

\begin{enumerate}
\item Run $\ALG_h$ with parameters $n=m$, $\eps = \gamma$ and error $\delta/2$ on an array $A$ defined as follows: for any query of $\ALG_h$ to $A[i]$ for $i \in [n]$, return $A[i] = (1+\eps) \cdot k$ if $x_i = 1$ and return $0$ otherwise. 

\item If at any point, the number of queries of $\ALG_h$ reaches
			\[
				\tau(n,k,\eps,\delta) = \frac{n \cdot \ln (1/(4\delta))}{24 \varepsilon^2 \cdot k},
			\]
			stop $\ALG_h$ and return \emph{No} as the answer. 

\item If we never stopped $\ALG_h$, return \emph{Yes} if $\ALG_h$ returns $\tilde{h} \geq k-\eps^2 \cdot k$; otherwise return \emph{No}. 
\end{enumerate}
\end{tbox}

It is clear that the worst-case query complexity of \texttt{\hyperref[sec:reductionAlg]{PTP-estimator}} is $< \tau(n,k,\eps,\delta)$ by the condition on the second line of the algorithm. In terms of parameters for 
$\ORTP_{m,k,\gamma}$, this translates to the bound of
	 $\frac{m \cdot \ln{(1/(4\delta))}}{24\,\gamma^2 \cdot k}$
on the {worst-case} query complexity of \texttt{\hyperref[sec:reductionAlg]{PTP-estimator}}. In the following, we will prove that \emph{if} $\ALG_h$ truly exists, then \texttt{\hyperref[sec:reductionAlg]{PTP-estimator}} solves $\ORTP_{m,k,\gamma}$ with probability of success at least $1-\delta$. But, then \texttt{\hyperref[sec:reductionAlg]{PTP-estimator}} contradicts the lower bound of~\Cref{lem:ORTP} -- this implies that $\ALG_h$ cannot exist, and we get our desired lower bound in~\Cref{thm:lower}. 

\begin{lemma}\label{lem:or-estimator-wins}
	\textnormal{\texttt{\hyperref[sec:reductionAlg]{PTP-estimator}}} outputs the correct answer to any instance of $\ORTP_{m,k,\gamma}$ with probability at least $1-\delta$. 
\end{lemma}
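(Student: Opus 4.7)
The reduction implicitly builds the array
\[
A[i]=(1+\eps)k\cdot x_i\qquad\text{for each }i\in[m],
\]
so every nonzero entry of $A$ shares the common value $(1+\eps)k$ and the definition of the h-index immediately yields $\hindex{A}=\min(\lvert x\rvert_1,\ \lfloor(1+\eps)k\rfloor)$. My plan is to condition on the high-probability event, delivered by \Cref{lem:distlemma}, that $\lvert x\rvert_1$ lies in the ``correct'' range for whichever of $D_0,D_1$ produced $x$, and then analyze the Yes and No cases of \ORTP separately.

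\textbf{Yes case ($x\sim D_1$).} By \Cref{lem:distlemma}, with probability at least $1-\delta$ we have $\lvert x\rvert_1\ge(1+\gamma)k=(1+\eps)k$, forcing $\hindex{A}=\lfloor(1+\eps)k\rfloor$. Under the contradiction hypothesis, with probability at least $1-\delta/2$ the algorithm $\ALG_h$ both terminates in $o(n\ln(1/\delta)/(\eps^2\hindex{A}))=o(n\ln(1/\delta)/(\eps^2 k))$ queries and returns a $(1\pm\eps)$-accurate estimate $\tilde h$. For a small enough hidden constant in $o(\cdot)$ this count is strictly below $\tau(n,k,\eps,\delta)$, so the cap in Step~2 does not fire, and the accuracy guarantee gives $\tilde h\ge(1-\eps)(1+\eps)k=k-\eps^2k$, so Step~3 outputs \emph{Yes}.

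\textbf{No case ($x\sim D_0$).} By \Cref{lem:distlemma}, with probability at least $1-\delta$ we have $\lvert x\rvert_1\le(1-\gamma)k=(1-\eps)k$, so $\hindex{A}\le(1-\eps)k$. If the cap in Step~2 fires, the reduction outputs \emph{No}, which is correct. Otherwise $\ALG_h$ terminates, and on the probability-$\ge 1-\delta/2$ event that $\ALG_h$'s output is $(1\pm\eps)$-accurate we get $\tilde h\le(1+\eps)(1-\eps)k=k-\eps^2k$, so Step~3 again outputs \emph{No}.

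\textbf{Wrap-up and main obstacle.} A union bound over the ``bad sample'' event ($\le\delta$ from \Cref{lem:distlemma}) and the failure event of $\ALG_h$ ($\le\delta/2$) bounds the total error of \texttt{PTP-estimator} by $3\delta/2$, which is $<1/2$ for $\delta<1/100$ and therefore still contradicts \Cref{lem:ORTP} up to constants in the resulting query lower bound. The principal technical nuisance is the Yes-case budget check: one must verify that the $o(\cdot)$ query bound on $\ALG_h$ is strictly less than the explicit cap $\tau$, which requires a little bookkeeping of the hidden constant and of the ratio $\ln(1/(4\delta))/\ln(1/\delta)\to 1$; a secondary, purely notational subtlety is the boundary case $\tilde h=k-\eps^2k$, which is absorbed either by loosening the comparison threshold by a tiny additive slack or by noting that \Cref{lem:distlemma} actually gives strict separation with positive slack in both directions on the conditioning event.
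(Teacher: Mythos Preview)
Your proposal is correct and follows essentially the same route as the paper: condition on the concentration event of \Cref{lem:distlemma}, split into the Yes and No cases, and in each case use the assumed guarantee on $\ALG_h$ (both accuracy and query budget) to verify that the thresholding in Step~3 and the cap in Step~2 behave as intended. Your accounting of the union bound is actually more careful than the paper's---you arrive at $3\delta/2$ total error because \Cref{lem:distlemma} bounds the bad-sample event by $\delta$ rather than $\delta/2$, and you correctly observe that this still suffices to contradict \Cref{lem:ORTP} up to constants; the paper asserts $1-\delta$ directly, implicitly absorbing this slack. Your flagged nuisances (the $o(\cdot)$ versus $\tau$ comparison and the boundary $\tilde h=k-\eps^2k$) are real but minor, and your suggested fixes are adequate.
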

\begin{proof} 
	 \Cref{lem:distlemma} implies that any algorithm that can differentiate whether $\card{x}_1 \geq (1+\gamma) \cdot k$ or $\card{x}_1 \leq (1-\gamma) \cdot k$ with probability $1-\delta/2$ can also solve $\ORTP$ with probability $1-\delta$. Therefore, it is sufficient to prove that \texttt{\hyperref[sec:reductionAlg]{PTP-estimator}} outputs \emph{Yes} when $\card{x}_1 \geq (1+\gamma) \cdot k$ and \emph{No} when $\card{x}_1 \leq (1-\gamma) \cdot k$ with probability at least $1-\delta/2$. We consider each case of the right answer to $\ORTP$ separately.
	
	\paragraph*{Case I:} Suppose first that the input $x$ to $\ORTP$ is a \emph{Yes}-instance, meaning that $\card{x}_1 \geq (1+\gamma) \cdot k$. Consider the array $A$ \emph{implicitly} constructed by \texttt{\hyperref[sec:reductionAlg]{PTP-estimator}}. Given that $\eps = \gamma$, 
	$A$ contains at least $(1+\eps)\cdot k$ entries each with a value of at least $(1+\eps) \cdot k$. Moreover, it does not contain any entry with a value larger than $(1+\eps)\cdot k$. Thus, we have $\hindex{A} = (1+\eps) \cdot k$. By the guarantee of 
	$\ALG_h$ on its correctness and since $\hindex{A} > k$, the probability that $\ALG_h$ outputs a value 
	\[
	\tilde{h} < \hindex{A} - \eps \cdot \hindex{A} = (1+\eps) \cdot k - \eps \cdot k - \eps^2 \cdot k = k - \eps^2 \cdot k
	\]
	or makes more than $\tau(n,k,\eps,\delta)$ queries on $A$ and thus we stop it is at most $\delta/2$. 
	
	\paragraph*{Case II:} Suppose now that the input $x$ to $\ORTP$ is a \emph{No}-instance, meaning that $\card{x}_1 \leq (1-\gamma) \cdot k$. 
	Consider the array $A$ \emph{implicitly} constructed by \texttt{\hyperref[sec:reductionAlg]{PTP-estimator}}. Given that $\eps = \gamma$, 
	$A$ contains at most $(1-\eps)\cdot k$ non-zero entries, so $\hindex{A} \leq (1-\eps) \cdot k$. Thus, by the guarantee of 
	$\ALG_h$ on its correctness, the probability that $\ALG_h$ outputs a value 
	\[
	\tilde{h} \geq k - \eps^2 \cdot k = (1-\eps) \cdot k + \eps \cdot k - \eps^2 \cdot k \geq \hindex{A} + \eps \cdot \hindex{A}
	\]
	is at most $\delta/2$. This means that \emph{if} we do not stop $\ALG_h$ (because it has made too many queries), the output will only be wrong with probability at most $\delta/2$. But now note that we do not have any particular guarantee on the probability
	that we stop $\ALG_h$ as it is possible that $\hindex{A}$ is much less than $k$ and thus the bound of $o(n\ln{(1/\delta)}/(\eps^2\hindex{A}))$ on the queries of $\ALG_h$ will still be way less than $\tau(n,k,\eps,\delta)$. Nevertheless, even if we stop the algorithm, we output \emph{No} as the answer and thus make no error here. Thus, in this case also, the probability of outputting a wrong answer is $\delta/2$ at most as desired. 
	
	This concludes the proof of~\Cref{lem:or-estimator-wins}. 
\end{proof}

\Cref{thm:lower} now follows immediately from~\Cref{lem:ORTP} and~\Cref{lem:or-estimator-wins} as argued earlier.

% !TeX root = main.tex 
%!TEX root = main.tex
\section{Triangle Counting Problem}\label{sec:lowerTriangle}  
%%\begin{definition}
%%A triangle within an undirected graph $G$ is a set of $3$ vertices such that any two of them are adjacent.  
%%\end{definition}
%%We define the Triangle Counting Problem as follows. 

In this section, we switch from the main theme of our paper which was on the $h$-index problem and instead show an application of our lower bound techniques to the well-studied 
problem of triangle counting using local queries. %, in particular, the triangle counting problem. %This problem is defined as follows: 

\begin{problem}
In $\TCP_{n, m,\varepsilon}$, for integers $n, m \geq 1$ and parameter $\varepsilon \in (0,1)$, we are given an undirected graph $G=(V,E)$ with $n$ vertices and $m$ edges, and the goal is to estimate 
the number of triangles, namely, cliques on three vertices, in $G$ to within a $(1\pm \eps)$-factor. In order to do this, we can make the following queries to the graph: 
\begin{enumerate}
	\item \textit{Degree queries:} Given a vertex $v \in V$, return the degree of $v$ $(\deg(v))$. 
	\item \textit{Neighbor queries:} Given a vertex $v \in V$ and $i \in [n]$, return the $i^{th}$ neighbor of $v$ if $i \leq \deg(v)$ and ``None" otherwise. 
	\item \textit{Pair queries:} Given two vertices $u, v \in V$, return $1$ if $(u,v) \in E$ and $0$ otherwise. 
	\item \textit{Edge-sample queries:} Return an edge $e \in E$ independently and uniformly at random. 
\end{enumerate} 
\end{problem}

We refer the reader to~\cite{EdenLRS15,EdenRS18,EdenR18b,AssadiKK19} and references therein for more on the background of this problem. Here, we only note that~\cite{EdenLRS15} designed an algorithm for this problem
with time complexity 
	$O^*(\frac{n}{t^{1/3}} + \frac{m^{3/2}}{t})$,
where $t$ is the number of triangles and $O^*$ hides the dependence on $\eps$, error probability $\delta$, and logarithmic factors in $n$. The algorithm of~\cite{EdenLRS15} only requires the first three types of queries mentioned above, 
which is generally considered the baseline for sublinear time algorithms and is referred to as the general query model. Later, by using the fourth type of query also,~\cite{AssadiKK19} obtained an algorithm for this problem
with time complexity
	$O(\frac{m^{3/2} \cdot \ln{(1/\delta)}}{\eps^2 \cdot t})$ 
(the algorithm of~\cite{AssadiKK19} extends to counting \emph{all} subgraphs, not just triangles, with a runtime depending on the fractional edge cover of the subgraph we are counting).

On the lower bound front,~\cite{EdenR18b}, building on~\cite{EdenLRS15}, proved a lower bound of 
$\Omega(\frac{m^{3/2}}{t})$
for the triangle counting problem under the four queries mentioned. This lower bound, however, only holds for some constant $\eps$ and $\delta$ and does not incorporate the dependence on them. 

In this section, using our lower bound for the $\ORTP$ problem in~\Cref{lem:ORTP}, we will improve the lower bound of~\cite{EdenR18b} and obtain a lower bound that matches the algorithmic bounds of~\cite{AssadiKK19}.
\begin{theorem} \label{thm:Tri}
Any algorithm that, given access to an undirected graph $G=(V,E)$ through degree, neighbor, pair, and edge-sample queries, approximation parameter $\varepsilon \in (0,1/4)$, and confidence parameter $\delta \in (0,1/100)$, outputs an estimate $\tilde{t}$ of the number of triangles, $t$, in $G$ such that 
$\Pr(|\tilde{t} - t | \leq \varepsilon \cdot t) \geq 1-\delta$
requires
\[
\Omega(\min(m, \dfrac{m^{3/2} \cdot \ln (1/\delta)}{\varepsilon^2 \cdot t}))
\]
queries to the graph provided that $t = o(\eps \cdot m)$. 
\end{theorem}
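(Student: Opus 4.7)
The plan is to reduce \ORTP (\Cref{lem:ORTP}) to \TCP. Given an \ORTP input $x\in\{0,1\}^{m'}$, construct $G_x$ on vertex set $V = A\sqcup B$ with $|A|=|B|=d=\lfloor\sqrt{m}\rfloor$: include all $d^2$ bipartite edges between $A$ and $B$, which carry no triangles, and identify the $\binom{d}{2}$ unordered pairs inside $A$ with the coordinates of $x$, adding the pair as an edge exactly when the corresponding bit of $x$ is $1$. Every intra-$A$ edge lies in exactly $d$ triangles (one per vertex of $B$), so the triangle count of $G_x$ equals $d\cdot|x|_1$, and the total edge count is $d^2+|x|_1 = \Theta(m)$ whenever $|x|_1 = o(m)$ (padding with non-triangle edges can bring the edge total to exactly $m$ if needed).

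Choose \ORTP parameters $m'=\binom{d}{2}=\Theta(m)$, $k'=\lceil t/d\rceil=\Theta(t/\sqrt{m})$, and $\gamma=\eps$. By \Cref{lem:distlemma}, in each of $\dist_0,\dist_1$ we have $|x|_1 \in (1\pm\eps)k'$ with probability $\geq 1-\delta$, so $(1\pm O(\eps))$-approximating the triangle count of $G_x$ distinguishes $\dist_0$ from $\dist_1$ with probability $\geq 1-O(\delta)$. In the primary regime the assumption $t=o(\eps m)$ places $(m',k',\gamma)$ inside the range of \Cref{lem:ORTP}, yielding $\Omega(m'\ln(1/\delta)/(\gamma^2 k')) = \Omega(m^{3/2}\ln(1/\delta)/(\eps^2 t))$ on the number of \ORTP queries. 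The boundary regime where $k'$ is too small for \Cref{lem:ORTP} to apply -- which is what produces the $\min$ with $m$ in the theorem -- is handled by the same rescaling trick as in the remark preceding \Cref{sec:orThresholding}, swapping $(\eps,\delta)$ for larger $(\eps',\delta')$ that bring $k'$ into range and yield an $\Omega(m)$ bound.

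To translate the \ORTP lower bound into the same rate for \TCP, I would simulate each of the four graph queries on $G_x$ using an amortized $O(1)$ \ORTP queries. Pair queries cost at most one \ORTP query. Neighbor queries to $B$-vertices are deterministic, as are the first $d$ neighbors of any $A$-vertex. Edge-sample queries are implemented by first tossing a biased coin using the (approximately known) edge count, and then either returning a uniform bipartite edge or rejection-sampling a uniform unordered $A$-pair until a PTP bit is observed to be $1$; the probability of entering the rejection branch is $\Theta(k'/m)$ and the expected rejection time is $\Theta(m/k')$, for an $O(1)$ amortized cost per call.

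The central obstacle is the remaining queries -- \textbf{degree queries, and high-index neighbor queries, to $A$-vertices} -- where the naive simulation would read all $\Theta(\sqrt{m})$ bits of $x$ adjacent to the queried vertex and destroy the reduction. My plan is to rebuild the lower bound over a modified ensemble in which every $A$-vertex has \emph{deterministic} degree $d+d_A$, with $d_A=2k'/d$: replace the i.i.d.\ Bernoulli ensemble on intra-$A$ edges with a uniformly random $d_A$-regular graph on $A$ (and a correspondingly shifted regular model for the $\dist_1$ case), and re-derive the KL bound of \Cref{lem:ORTP} for this degree-regularized variant. The key sub-claim to prove is that for any algorithm making $o(m^{3/2}/(\eps^2 t))$ queries, the view distributions induced by the regularized $\dist_0$ and $\dist_1$ can be coupled to their Bernoulli-product counterparts with total variation $o(\delta)$ on the touched bits, so the Pinsker/KL machinery of \Cref{lem:ORTP} carries through unchanged. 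Once degrees are deterministic, neighbor and edge-sample queries inside $A$ can also be navigated locally at $O(1)$ amortized cost per call, completing the reduction. A robust fallback, if the coupling proves delicate, is an information-theoretic argument bounding the KL contribution per degree query directly: under $t=o(\eps m)$ the expected degree-signal per query is $O(\eps k'/d)$ while the per-query variance is $\Theta(k'/d)$, so $\Omega(m^{3/2}/(\eps^2 t))$ queries are still required to accumulate $\Omega(1)$ distinguishing KL.
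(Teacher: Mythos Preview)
Your construction has a structural flaw that neither of your proposed fixes repairs. In your graph $G_x$, the triangle count is $d\cdot\card{x}_1$, and $\card{x}_1 = \tfrac{1}{2}\sum_{a\in A}(\deg(a)-d)$: the number of intra-$A$ edges is read off directly from the degree sequence of $A$. This has two consequences.

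First, your fallback KL calculation is off by a factor of $\sqrt{m}$. A single degree query to an $A$-vertex reveals the sum of $d-1\approx\sqrt{m}$ independent Bernoulli bits of $x$, so its KL contribution is $\Theta(\sqrt{m})$ times that of a single \ORTP bit-query. With your own numbers---signal $\Theta(\eps k'/d)$, variance $\Theta(k'/d)$---the per-query KL is $\Theta(\eps^2 k'/d)$, so accumulating $\ln(1/\delta)$ of KL takes only $\Theta\!\bigl(d\ln(1/\delta)/(\eps^2 k')\bigr)=\Theta\!\bigl(m\ln(1/\delta)/(\eps^2 t)\bigr)$ degree queries, not $\Theta\!\bigl(m^{3/2}\ln(1/\delta)/(\eps^2 t)\bigr)$. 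An adversary can beat your intended lower bound by a $\sqrt{m}$ factor using degree queries alone.

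Second, your primary fix---a random $d_A$-regular graph on $A$---cannot work here. If $d_A$ differs between the two cases, a single degree query distinguishes them. If $d_A$ is the same, then $\card{x}_1=d\cdot d_A/2$ is the same, hence the leading-order triangle count $d\cdot\card{x}_1$ is identical in both cases and there is nothing to estimate. The obstruction is that in your gadget the triangle count is (up to lower order) a function of the degree sequence, so one cannot simultaneously fix degrees and vary triangles.

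The paper sidesteps this with a \emph{rewiring} construction. It uses $U=U_1\cup U_2$ and $V=V_1\cup V_2$ (each part of size $\sqrt{m}/2$), places complete bipartite graphs on $U_1\times V_1$ and on $U_2\times V_2$, and for each bit $x_{ij}$ includes \emph{either} the edge pair $(u_i,u_j),(v_i,v_j)$ (when $x_{ij}=1$) \emph{or} the pair $(u_i,v_j),(v_i,u_j)$ (when $x_{ij}=0$). Either way every vertex has degree exactly $\sqrt{m}$ and the graph has exactly $m$ edges, but only the $x_{ij}=1$ choice produces triangle-forming ``red'' edges. Degree queries are then vacuous, and each neighbor, pair, or edge-sample query can be answered by reading at most one bit of $x$, so the reduction to \ORTP goes through cleanly at the target $\Omega\!\bigl(m^{3/2}\ln(1/\delta)/(\eps^2 t)\bigr)$ rate.
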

\Cref{thm:Tri} now settles the asymptotic complexity of the triangle counting problem in all parameters involved. 

\begin{remark}
	For concreteness, we focused on proving a lower bound only for the triangle counting problem as a representative of the wider family of subgraph counting problems. However, 
	by using our $\ORTP$ technique in place of the lower bound arguments in~\cite{EdenRS18} and~\cite{AssadiKK19},  one can also extend their lower bounds to asymptotically optimal bounds (matching the algorithm of~\cite{AssadiKK19}) 
	for larger cliques as well as odd-cycles.
\end{remark}

Similarly to the $h$-index problem, we prove~\Cref{thm:Tri} via a reduction from $\ORTP$ and our lower bound for that problem in~\Cref{lem:ORTP}.

\begin{remark}
To avoid confusion, in the rest of this proof, we use $m$ to denote the number of edges in the triangle counting problem and instead use $M$ (in place of the original $m$) 
for the dimension of the $\ORTP$ problem. 
\end{remark}

Suppose towards a contradiction that there is an algorithm $\ALG_{t}$ for triangle counting that queries input undirected graph, $G$, and estimates $t$ to within a $(1 \pm \eps)$-factor with probability at least $1-\delta/2 $ using $o(m^{3/2}\ln(1/\delta)/(\varepsilon^2t))$ queries. Given an instance of $\ORTP_{M,k,\gamma}$, we use $\ALG_t$ to solve $\ORTP$ with probability $1-\delta$.
Define $M = (\sqrt{m}/2)^2 = m/4$. We define a mapping from inputs of $\ORTP$, $x \in \{0,1\}^M$, to $G_x(V,E)$ on $n=2\sqrt{m}$ vertices and $m$ edges. 
\begin{itemize}
\item Let the vertices of $G_x$ consist of two sets, $U \cup V$, such that $U = \{u_1, ..., u_{\sqrt{m}}\}$ and $V = \{v_1, ..., v_{\sqrt{m}}\}$. There is no overlap between the two sets, so $U \cap V = \emptyset$. Let $U$ consist of two sets, $U_1 \cup U_2$, such that $U_1 = \{u_1, ..., u_{\sqrt{m}/2}\}$ and $U_2 = \{u_{\sqrt{m}/2 + 1}, ..., u_{\sqrt{m}}\}$. Similarly, let $V$ consist of two sets, $V_1 \cup V_2$, such that $V_1 = \{v_1, ..., v_{\sqrt{m}/2}\}$ and $V_2 = \{v_{\sqrt{m}/2 + 1}, ..., v_{\sqrt{m}}\}$.
\item We view $x$ as being indexed by pairs $i \in [\sqrt{m}/2], j \in [\sqrt{m}/2 + 1, \sqrt{m}]$ such that $i < j$. Now, we add edges in the following way. If $x_{ij} = 1$, $G_x$ contains edges $(u_i,u_j) \in U_1 \times U_2$ and $(v_i,v_j) \in V_1 \times V_2$. If $x_{ij} = 0$, $G_x$ contains edges $(u_i,v_j) \in U_1 \times V_2$ and $(v_i,u_j) \in V_1 \times U_2$. Additionally, for each vertex $u_1 \in U_1$ and $v_1 \in V_1$, $G_x$ contains edge $(u_1,v_1)$. For each vertex $u_2 \in U_2$ and $v_2 \in V_2$, $G_x$ contains edge $(u_2,v_2)$. There are no other edges that are added. 
\end{itemize}

See~\Cref{fig:graphMapping} for an illustration. 
\begin{figure}[h!]
	\centering
	\includegraphics[width=0.3\textwidth]{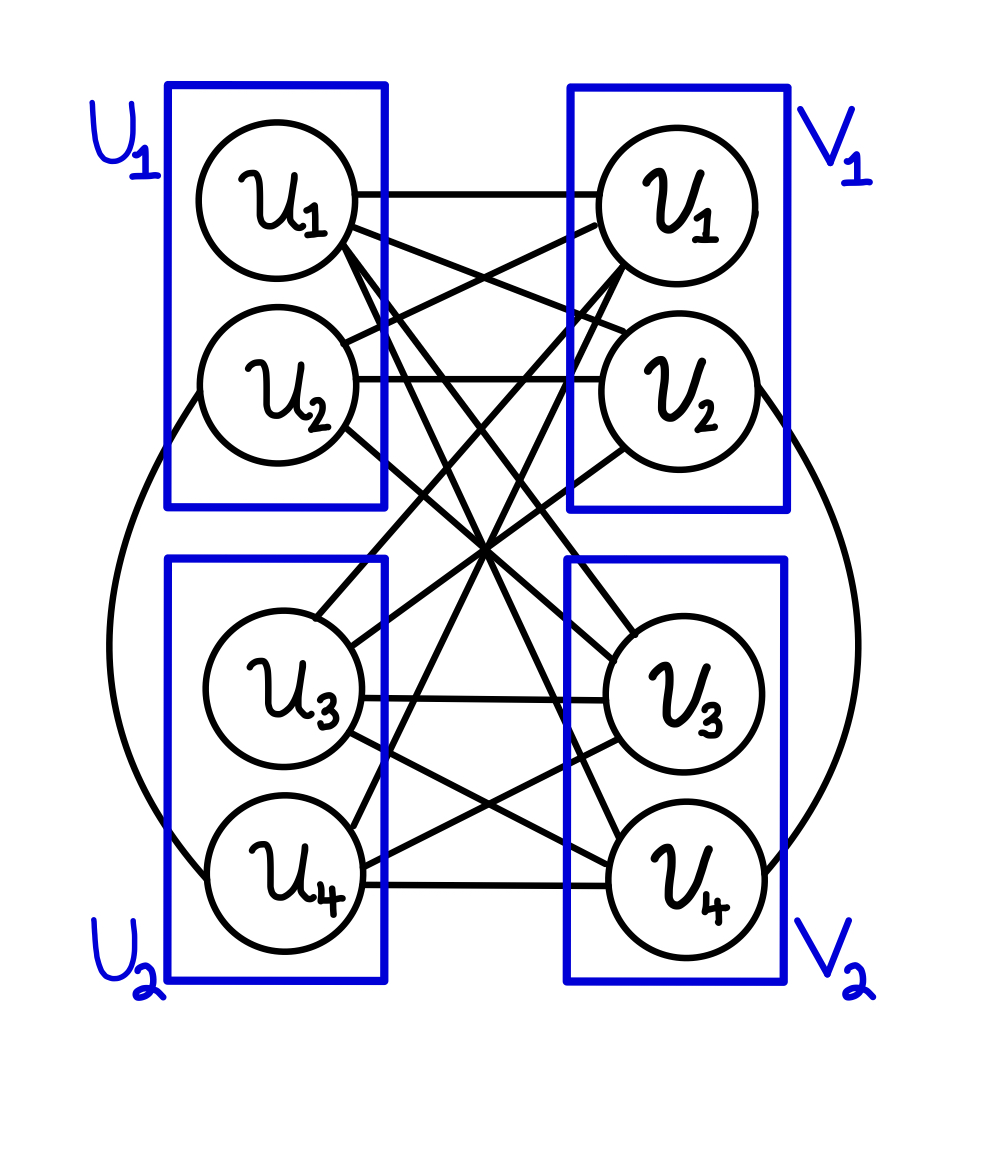}
	\caption{The graph $G_x$ for $x = 0001$. The bits are indexed by the vertex pairs $(13, 14, 23, 24)$. }
	\label{fig:graphMapping}
\end{figure} 

We now have the following reduction.  
\begin{tbox}
\texttt{\hyperref[alg:P2]{PTP-estimator-two}}($x \in \{0,1\}^M$, $k$, $\gamma$, $\delta$)
\begin{enumerate}
\item 
Run $\ALG_{t}$ with parameters $n = 2\sqrt{m}$, $m = 4M$, $\varepsilon = \gamma$, and error $\delta/2$ on an undirected graph $G$ defined as follows: 
\begin{itemize}
\item \textit{Degree queries.} For any degree query of $\ALG_{t}$, return $\sqrt{m}$. 

\item \textit{Neighbor queries.} For any neighbor query of $\ALG_{t}$, do the following. Assume w.l.o.g. that we get a vertex $u_i \in U_1$ and want to find the $k^{th}$ neighbor. If $k \leq \sqrt{m}/2$, return $v_i$. Otherwise, set $j \leftarrow k$. Then, if $x_{ij}$ is $1$, return $u_j$; else, $v_j$. 

\item \textit{Pair queries.} For any pair query of $\ALG_{t}$, if an edge between a vertex $u \in U_1$ and a vertex $v \in V_1$ or between $u\in U_2$ and $v \in V_2$ is queried, return $1$. If an edge between any two vertices in $U_1$, $U_2$, $V_1$, or $V_2$ is queried, return $0$. Else, for some query $(u_i,v_j)$ such that $i < j$, return $\neg x_{ij}$. For some query $(u_i,u_j)$ such that $i<j$, return $x_{ij}$.  

\item \textit{Edge-sample queries.} For any random edge-sample query made by $\ALG_{t}$, uniformly at random pick a vertex $v \in V$ and then uniformly at random pick one of its neighbors $u$. Return the edge $(u, v)$.  
\end{itemize}
\item If at any point, the number of queries of $\ALG_{t}$ reaches 
\[
				\tau(m,k,\eps,\delta) = \frac{m \cdot \ln (1/(4\delta))}{9600 \varepsilon^2 \cdot k},
			\] stop $\ALG_{t}$ and return \emph{No} as the answer. 

\item 
If we never stopped $\ALG_{t}$, return \emph{Yes} if $\ALG_{t}$ returns $\tilde{t} \geq 2k(\sqrt{m}-2)(1-\varepsilon^2)$; otherwise, return \emph{No}. 
\end{enumerate}
\end{tbox}

It is clear that the worst-case query complexity of \texttt{\hyperref[alg:P2]{PTP-estimator-two}} is $<\tau(m,k,\eps,\delta)$. In terms of parameters for $\ORTP_{M,k,\gamma}$, this translates to the bound of 
$\dfrac{M \cdot \ln(1/(4\delta))}{24 \gamma^2 \cdot k}$
on the worst-case query complexity of \texttt{\hyperref[alg:P2]{PTP-estimator-two}}. In the following, we will prove that \emph{if} $\ALG_t$ exists, then \texttt{\hyperref[alg:P2]{PTP-estimator-two}} solves $\ORTP_{M,k,\gamma}$ with probability of success at least $1-\delta$. But then, \texttt{\hyperref[alg:P2]{PTP-estimator-two}} contradicts the lower bound of~\Cref{lem:ORTP} which implies that $\ALG_t$ cannot exist, and we get our desired lower bound in~\Cref{thm:Tri}. 

We note that in the following lemma, the lower bound on $k$ and lower bound on $\eps$ is benign as otherwise the $\Omega(m)$ part of our lower bound in~\Cref{thm:Tri} should instead kick in. 

\begin{lemma} \label{lem:ortpEst2Proof}
\textnormal{\texttt{\hyperref[alg:P2]{PTP-estimator-two}}} outputs the correct answer to any instance of $\ORTP_{M, k, \gamma}$ with probability at least $1-\delta$ as long as $k = \omega(\ln{(1/\delta)}/\eps^2)$,  $k=o(\eps \cdot m)$, and $\eps = \omega(1/\sqrt{m})$.  
\end{lemma}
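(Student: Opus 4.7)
The plan is to mirror the structure of \Cref{lem:or-estimator-wins}: by \Cref{lem:distlemma} (rescaling the confidence), it is enough to show that \texttt{PTP-estimator-two} outputs \emph{Yes} whenever $|x|_1 \geq (1+\gamma)k$ and \emph{No} whenever $|x|_1 \leq (1-\gamma)k$, each with probability at least $1 - \delta/2$ over the randomness of $\ALG_t$ and any residual randomness of $x$. Everything reduces to understanding how the triangle count $t(G_x)$ depends on $x$, and then invoking the $(1\pm \eps)$-guarantee of $\ALG_t$ to compare $\tilde t$ against the threshold $2k(\sqrt m - 2)(1-\eps^2)$.

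The first step is to compute $t(G_x)$ in closed form. Because $G_x$ is $4$-partite with parts $U_1, V_1, U_2, V_2$ (each an independent set of size $\sqrt m/2$), every triangle uses exactly three of these four parts. I enumerate the four possible triples $(U_1,V_1,U_2)$, $(U_1,V_1,V_2)$, $(U_1,U_2,V_2)$, $(V_1,U_2,V_2)$ and in each case rewrite the triangle condition directly in terms of the bits $x_{ij}$; for instance, $(u_i, v_{i'}, u_j)$ is a triangle iff $x_{ij}=1$ and $x_{i'j}=0$, contributing $s_j(\sqrt m/2 - s_j)$ triangles for each $j$, where $s_j := \sum_i x_{ij}$. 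Letting also $t_i := \sum_j x_{ij}$, summing the four contributions and using $\sum_j s_j = \sum_i t_i = |x|_1$ yields
\begin{equation*}
t(G_x) \;=\; 2\sqrt m \cdot |x|_1 \;-\; 2\Bigl(\sum_j s_j^2 + \sum_i t_i^2\Bigr).
\end{equation*}

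The second step is to show that this formula separates the Yes and No cases sharply. The leading term $2\sqrt m\,|x|_1$ is $\Theta(k\sqrt m)$ and, under the two-sided bound from \Cref{lem:distlemma} with $\gamma = \eps$, differs by $4\eps k\sqrt m$ between the two cases. The correction term has mean $\Exp[2(\sum s_j^2 + \sum t_i^2)] = O(k + k^2/\sqrt m)$; the hypothesis $k = o(\eps m)$ forces $k^2/\sqrt m = o(\eps k \sqrt m)$, and $\eps = \omega(1/\sqrt m)$ forces $k = o(\eps k \sqrt m)$, so in expectation the correction is $o(\eps k \sqrt m)$. A Bernstein-type bound on the sums of squared binomials (using independence of the $x_{ij}$'s across $j$ for $\sum_j s_j^2$, and across $i$ for $\sum_i t_i^2$) controls the fluctuation by $o(\eps k \sqrt m)$ with failure probability $\leq \delta/8$. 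Combined with the bounds on $|x|_1$ from \Cref{lem:distlemma}, this gives $t(G_x) \geq 2k(\sqrt m - 2)(1 + \eps)$ in the Yes case and $t(G_x) \leq 2k(\sqrt m - 2)(1 - \eps)$ in the No case, each with total failure probability at most $\delta/4$.

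For the third step, the $(1\pm \eps)$-guarantee of $\ALG_t$ (with error $\delta/2$) translates these bounds into a clean comparison against the threshold: in the Yes case $\tilde t \geq (1-\eps)t(G_x) \geq 2k(\sqrt m - 2)(1 - \eps^2)$, so we output \emph{Yes}; in the No case $\tilde t \leq (1+\eps)t(G_x) \leq 2k(\sqrt m - 2)(1-\eps^2)$ (with a negligible strict-inequality slack), so we output \emph{No}. I also need to verify the early-stop step is harmless: in the Yes case $t(G_x) = \Theta(k\sqrt m)$ so the assumed query complexity $o(m^{3/2}\ln(1/\delta)/(\eps^2 t(G_x)))$ of $\ALG_t$ is $o(\tau(m,k,\eps,\delta))$ and we never trigger line~2; in the No case, stopping early still returns the correct \emph{No}. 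A union bound over the $O(1)$ failure events yields success probability $\geq 1-\delta$. The main technical obstacle is the tight bookkeeping of the correction term $2(\sum s_j^2 + \sum t_i^2)$, which is exactly where the hypotheses $k=o(\eps m)$ and $\eps = \omega(1/\sqrt m)$ enter: without them the $k^2/\sqrt m$ and $k$ terms could eat into the $\Theta(\eps k \sqrt m)$ margin separating the two cases.
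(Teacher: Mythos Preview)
Your approach is essentially the paper's: both reduce via \Cref{lem:distlemma}, both express the triangle count as roughly $2\sqrt m\cdot|x|_1$ minus a second-order term coming from the row/column sums $s_j,t_i$ (what the paper calls the vertex ``red-degrees''), and both handle the early-stop rule in line~2 identically. Your closed form $t(G_x)=2\sqrt m\,|x|_1-2\bigl(\sum_j s_j^2+\sum_i t_i^2\bigr)$ is in fact cleaner than the paper's per-red-edge accounting, but the underlying idea is the same.

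That said, there is a quantitative gap in your threshold comparison (which, to be fair, the paper also glosses over). You assert $t(G_x)\ge 2k(\sqrt m-2)(1+\eps)$ in the Yes case using $|x|_1\ge(1+\eps)k$ together with ``correction $=o(\eps k\sqrt m)$.'' But expanding, that inequality demands the correction be at most $4(1+\eps)k$, whereas its mean alone is $\Theta(k+k^2/\sqrt m)$, and $k^2/\sqrt m\gg k$ whenever $k\gg\sqrt m$ --- which is entirely permitted by $k=o(\eps m)$. The symmetric No-case claim $t(G_x)\le 2k(\sqrt m-2)(1-\eps)$ has the mirror problem. More structurally: once you only know $|x|_1\gtrless(1\pm\eps)k$ from \Cref{lem:distlemma} and only have a $(1\pm\eps)$-approximate $\tilde t$, both $(1-\eps)\,t_{\textnormal{Yes}}$ and $(1+\eps)\,t_{\textnormal{No}}$ land at $2k\sqrt m(1-\eps^2)$ up to $o(\eps k\sqrt m)$, so no fixed threshold separates them. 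The easy fix is to use the full concentration of $|x|_1$ around its actual mean $(1\pm 2\gamma)k$ (a slightly sharper version of \Cref{lem:distlemma}), which restores an $\Omega(\eps k\sqrt m)$ margin that your $o(\eps k\sqrt m)$ correction then genuinely fits inside.
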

\begin{proof}
\Cref{lem:distlemma} implies that any algorithm that can differentiate whether $\card{x}_1 \geq (1+\gamma) \cdot k$ or $\card{x}_1 \leq (1-\gamma) \cdot k$ with probability $1-\delta/2$ can also solve $\ORTP$ with probability $1-\delta$. Therefore, it is sufficient to prove that \texttt{\hyperref[alg:P2]{PTP-estimator-two}} outputs \emph{Yes} when $\card{x}_1 \geq (1+\gamma) \cdot k$ and \emph{No} when $\card{x}_1 \leq (1-\gamma) \cdot k$ with probability at least $1-\delta/2$. 

Within $G_x$, we will define \textbf{red edges}. Let the red edges include any edges between any two vertices $\in U_1$. The set of red edges will also include any edges between any two vertices $\in V_1$. For every vertex $v$, we define \textbf{reddeg($v$)} as the number of red edges incident on $v$. 

We consider each case of the right answer to $\ORTP$ separately. 
 
\paragraph*{Case I:} Suppose first that the input $x$ to $\ORTP$ is a \emph{Yes}-instance, meaning that for each index $i \in [M]$, $x_i$ was set to $1$ independently with probability $(1+2\gamma) \cdot k/M$. Consider the graph $G$ \emph{implicitly} constructed by \texttt{\hyperref[alg:P2]{PTP-estimator-two}}. For every bit set to $1$ in $x$, there are two red edges in $G_x$. Each red edge $(u,v)$ creates $(\sqrt{m}-2) - \text{reddeg}(u) - \text{reddeg}(v)$ triangles.

We want to ensure that in the \emph{Yes}-instance, there are enough triangles. We first lower bound the total number of red edges. Since the number of red edges corresponds to $|x|_1$, we can use \Cref{lem:distlemma}. By the choice of $k = \omega(\ln{(1/\delta)}/\eps^2)$, we can see that the probability that $|x|_1 < (1+\gamma) \cdot k $ is bounded by $\delta/2$. Now, we bound for each edge, $(u, v)$, $ \text{reddeg}(u) + \text{reddeg}(v)$. Let us first bound the number of red edges incident on each vertex.

\begin{claim} \label{claim:redV}
When $x$ is a \emph{Yes}-instance, for each vertex $v$, $\Pr(\emph{reddeg}(v) > \eps/3 \cdot \sqrt{m}) \leq \delta/\sqrt{m}$. 
\end{claim}
\begin{proof}
For each vertex $v$, the probability of an edge incident on it being red is $(1+2\vareps) \cdot k/(m/4)$ and there are potentially $\sqrt{m}/2$ red edges. Therefore, $\Exp[\text{reddeg}(v)] =  (1+2\vareps) \cdot k/(m/4) \cdot \sqrt{m}/2$. By the lower bound on $k$, $\Exp[\text{reddeg}(v)] \leq \vareps/4 \cdot \sqrt{m}$. We now use the Chernoff bound (\Cref{prop:conc}) to bound the probability that reddeg($v$) is too large and have 
\begin{align*}
\Pr(\text{reddeg}(v) > \eps/3 \cdot \sqrt{m}) \leq \exp(-\dfrac{(1/3)^2 \cdot \Exp[\text{reddeg}(v)]}{3}) \leq \delta/\sqrt{m}
\end{align*}
where the last inequality is because of the lower bound on $\vareps$. 
\end{proof}

~\cref{claim:redV} implies that any edge $(u,v)$, $ \text{reddeg}(u) + \text{reddeg}(v)$ is at most $2/3 \cdot \vareps/\sqrt{m}$. Thus, by the guarantee of $\ALG_t$ on its correctness, the probability that $\ALG_t$ outputs a value 
\begin{align*}
	\tilde{t} < t - \varepsilon \cdot t \leq 2(\sqrt{m}-2) (1+\varepsilon) \cdot k - \varepsilon \cdot 2(\sqrt{m}-2) (1+\varepsilon) \cdot k = 2k(\sqrt{m}-2)(1-\varepsilon^2)
\end{align*}      
is at most $\delta/2$. This means that \emph{if} we do not stop $\ALG_t$ (because it has made too many queries), the output will only be wrong with probability at most $\delta/2$. Additionally, since $t/(2(\sqrt{m}-2)) > k$ and the number of queries made by $\ALG_t$ is supposed to be $o(m^{3/2}\ln(1/\delta)/(\varepsilon^2t))$, $\ALG_t$ will never make more than $\tau(m,k,\eps,\delta)$ queries on $G$. Therefore, in this case, the probability of outputting a wrong answer is at most $\delta/2$ as desired.  

\paragraph*{Case II:} Suppose instead that the input $x$ to $\ORTP$ is a \emph{No}-instance, meaning that for each index $i \in [M]$, $x_i$ was set to $1$ independently with probability $(1-2\gamma) \cdot k/M$. Consider the graph $G$ \emph{implicitly} constructed by \texttt{\hyperref[alg:P2]{PTP-estimator-two}}. Every red edge can create at most $(\sqrt{m}-2)$ triangles with vertices on the other side of the bipartition. 

We first bound the total number of red edges. Since the number of red edges corresponds to $|x|_1$, we can use \Cref{lem:distlemma}. By the choice of $k = \omega(\ln{(1/\delta)}/\eps^2)$, we can see that the probability that $|x|_1 > (1-\gamma) \cdot k $ is bounded by $\delta/2$. Therefore, by the guarantee of $\ALG_t$ on its correctness, the probability that $\ALG_t$ outputs a value 
\begin{align*}
	\tilde{t} \geq 2k(\sqrt{m}-2) (1-\varepsilon^2) = 2(\sqrt{m} - 2) (1-\varepsilon) \cdot k + \varepsilon \cdot 2(\sqrt{m} -2) (1-\varepsilon) \cdot k  \geq t + \varepsilon \cdot t 
\end{align*}      
is at most $\delta/2$. This means that \emph{if} we do not stop $\ALG_t$ (because it has made too many queries), the output will only be wrong with probability at most $\delta/2$. But now note that we do not have any particular guarantee on the probability that we stop $\ALG_t$ since it is possible that $t/(2(\sqrt{m}-2))$ is much less than $k$ and thus the bound of $o(m^{3/2}\ln(1/\delta)/(\varepsilon^2t))$ on the queries of $\ALG_t$ will still be much less than $\tau(m,k,\eps,\delta)$. Nevertheless, even if we stop the algorithm, we output \emph{No} as the answer and thus make no error here. Thus, in this case also, the probability of outputting a wrong answer is $\delta/2$ at most as desired. 

This concludes the proof of~\Cref{lem:ortpEst2Proof}. 

\end{proof}
\pagebreak

\subsection*{Acknowledgements} 

We thank Janani Sundaresan for helpful feedback on the presentation of our paper. We are also grateful to the anonymous reviewers of APPROX 2022 for their helpful feedback on previous work and the presentation of this paper.

\bibliographystyle{alpha}
\bibliography{general}

\end{document}